\documentclass[12pt]{article}
\usepackage{amsmath}
\usepackage{amsthm}
\usepackage{thmtools, thm-restate}
\usepackage{graphicx,psfrag,epsf}
\usepackage{enumerate}
\usepackage{natbib}
\usepackage{siunitx}
\usepackage{url} 
\usepackage{bbm}
\usepackage{rotating}
\setlength{\tabcolsep}{8pt}
\usepackage{xcolor}
\usepackage{caption}
\usepackage{subcaption}
\usepackage{afterpage}

\usepackage{verbatim}


\usepackage[export]{adjustbox}
\usepackage{array}

\usepackage[linesnumbered,ruled]{algorithm2e}

\RequirePackage{amsmath, amssymb}

\declaretheorem{theorem}
\declaretheorem[sibling=theorem]{lemma}

\declaretheorem[sibling=theorem]{proposition}

\usepackage{amsmath}
   





\newif\ifJASA
\JASAfalse 

\newif\ifnotblind
\notblindtrue 

\ifJASA
\addtolength{\oddsidemargin}{-.5in}%
\addtolength{\evensidemargin}{-.5in}%
\addtolength{\textwidth}{1in}%
\addtolength{\textheight}{-.3in}%
\addtolength{\topmargin}{-.8in}%
\else

\addtolength{\oddsidemargin}{-.5in}%
\addtolength{\evensidemargin}{-.5in}%
\addtolength{\textwidth}{1in}%
\addtolength{\textheight}{.8in}%
\addtolength{\topmargin}{-.8in}%
\fi

\begin{document}

\def\spacingset#1{\renewcommand{\baselinestretch}%
{#1}\small\normalsize} \spacingset{1}


{
  \title{\bf Poisson-FOCuS: An efficient online method for detecting count bursts with application to gamma ray burst detection.}
\ifnotblind
  \author{
    Kes Ward\\
    STOR-i Doctoral Training Centre,\\ Lancaster University, Lancaster, UK\\~\\
    Giuseppe Dilillo \\
    Department of Mathematics, Computer Science and Physics, \\University of Udine, Udine, Italy\\~\\
    Idris Eckley \\
    Department of Mathematics and Statistics,\\ Lancaster University, Lancaster, UK
     \\~\\
    Paul Fearnhead\\
    Department of Mathematics and Statistics,\\ Lancaster University, Lancaster, UK}
\fi
  \maketitle
}


\begin{abstract}
Gamma-ray bursts are flashes of light from distant exploding stars. Cube satellites that monitor photons across different energy bands are used to detect these bursts. There is a need for computationally efficient algorithms, able to run using the limited computational resource onboard a cube satellite, that can detect when gamma-ray bursts occur. Current algorithms are based on monitoring photon counts across a grid of different sizes of time window. We propose a new algorithm, which extends the recently developed FOCuS algorithm for online change detection to Poisson data. Our algorithm is mathematically equivalent to searching over all possible window sizes, but at half the computational cost of the current grid-based methods. We demonstrate the additional power of our approach using simulations and data drawn from the Fermi gamma-ray burst catalogue.

\end{abstract}

\noindent%
{\it Keywords:} 
Poisson process, Anomaly detection, Functional pruning, Gamma-ray bursts
\vfill

\ifJASA
\spacingset{1.45} 
\fi

\section{Introduction}

This work is motivated by the challenge of designing an efficient algorithm for detecting GRBs for cube satellites, such as the HERMES scientific pathfinder \cite[]{Hermes-2021}, during the early 2020s. Cube satellites are compact and therefore relatively cheap to launch into space, but have limited computational power on-board. 

Gamma ray bursts (GRBs) are short-lived bursts of gamma-ray light. They were first detected by satellites in the late 1960s \cite[]{Klebesadel1973-sg}, and are now believed to arise when matter collapses into black holes. At the time of writing there is considerable interest in detecting gamma ray bursts with a viewe to identifying interesting astronomical features \cite[]{Luongo2021-ik}.

The HERMES satellite array is interested in monitoring a variety of different photon energy bands, and regions in the sky, for the appearance of a GRB. Raw data from a satellite consists of a data stream of photons impacting a detector. The time of a photon impact is recorded in units of microseconds since satellite launch. New photon impacts are recorded on the order of approximately one every 500 microseconds. A GRB is indicated by a short period of time with an unusually high incidence of photons impacting the detector. Ideally the satellite would detect each GRB, and for each burst it detects it then transmits the associated data to earth.

There are a number of statistical challenges with detecting GRBs. First, they can come from close or far away sources, and can therefore be very bright and obvious to observe or very dim and hard to pick out from background. They can also impact the detector over a variety of timescales. Figure \ref{fig:short_and_long_bursts} shows two GRBs, one short and intense lasting a fraction of a second, one longer and less intense lasting about ten seconds. These bursts were taken from the FERMI catalogue \cite[]{Ajello2019-gn}. Bursts ranging from a fraction of a second to a few minutes are possible.
Second, less than one burst is recorded per 24 hours on average \cite[]{von2020fourth}, which is relatively rare in comparison to the velocity of the signal.
The background rate at which the satellite detects photons also varies over time. This variation is on timescales much larger (minutes to days) than those on which bursts occur (milliseconds to seconds), and thus is able to be estimated separately from the bursts. We will not address this aspect of the problem directly, though Appendix \ref{section:background_bias} gives some guidance on choice of background estimation approaches.
Finally, there are also computational challenges. For example, there is limited computational hardware on board the satellite, and additional constraints arise on the use of these due to battery life and lack of heat dissipation \cite[]{2003fenimore}. There is also a substantial computational and energy cost to transmitting data to earth, so only promising data should be sent.
\begin{figure}
     \centering
     \begin{subfigure}[b]{0.49\textwidth}
         \centering
         \includegraphics[width=\textwidth]{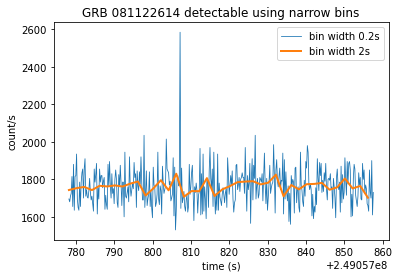}
         \caption{a short, intense burst}
         \label{subfig:short_burst}
     \end{subfigure}
     \hfill
     \begin{subfigure}[b]{0.49\textwidth}
         \centering
         \includegraphics[width=\textwidth]{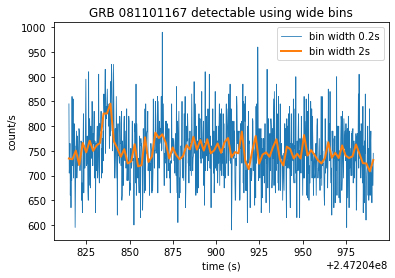}
         \caption{a long, less intense burst}
         \label{subfig:long_burst}
     \end{subfigure}
     
     \caption{Plots of two recorded gamma ray bursts from the FERMI catalogue, with photon counts binned into $0.2$s and $2$s intervals.}
     \label{fig:short_and_long_bursts}
\end{figure}

At a fundamental level, algorithmic techniques for detecting GRBs have gone unchanged through different generations of space-born GRB monitor experiments. As they reach a detector, high-energy photons are counted over a fundamental time interval and in different energy bands. Count rates are then compared against a background estimate over a number of pre-defined timescales \cite[]{lima1983}. To minimize the chance of missing a burst due to a mismatch between the event activity and the length of the tested timescales, multiple different timescales are simultaneously evaluated. Whenever the significance in count excess is found to exceed a threshold value over a timescale, a trigger is issued. 

Figure \ref{fig:computing_flowchart} gives a simplified overview of such a detection system. As data arrives we need to both detect whether a gamma ray burst is happening, and update our estimates of the background photon arrival rate. Because of the high computational cost of transmitting data to earth after a detection, if an algorithm detects a potential gamma ray burst there is an additional quality assurance step to determine whether it should be transmitted. The detection algorithm needs to be run at a resolution at which all gamma ray bursts are detectable. By comparison background re-estimation is only required once every second, and the quality assurance algorithm is only needed every time a potential gamma ray burst is detected. Thus the majority of the computational effort is for the detection algorithm -- and how to construct a statistically efficient detection algorithm with low computation is the focus of this paper.

\begin{figure}[t!]
    \centering
    \includegraphics[width=0.9\textwidth]{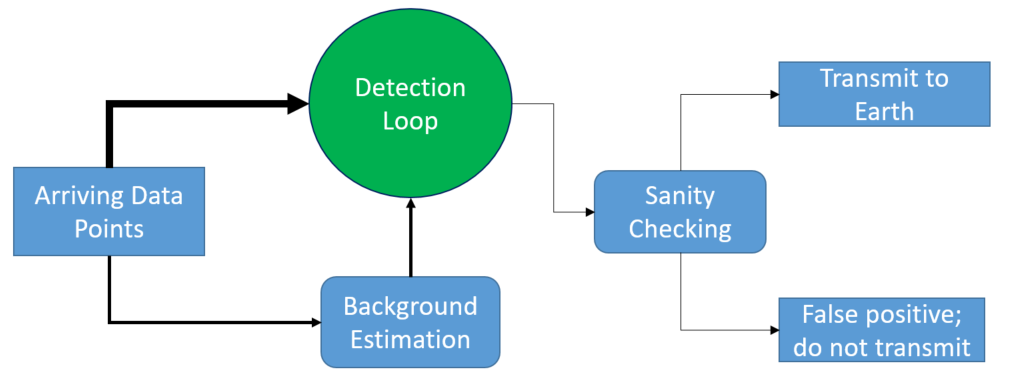}
    \caption{A schematic of the detection system, with the arrow thickness corresponding to the relative velocities of data flows. Most of the computing requirements of the trigger algorithm are within the detection loop, highlighted in green, which is the focus of this paper.}
    \label{fig:computing_flowchart}
\end{figure}

As mentioned, current practice for detecting a GRB is to compare observed photon counts with expected counts across a given bin width \cite[]{2012paciesas}. The choice of bin width affects the ease of discovery of different sizes of burst. Figure \ref{fig:short_and_long_bursts} shows an example. The short burst in Figure \ref{subfig:short_burst} is easily detectable with bin width $0.2$s, but lost to smoothing at a $2$s bin width. In contrast, the burst in Figure \ref{subfig:long_burst} has a signal too small relative to the noise to be detectable at a $0.2$s bin width, with the largest observation on the plot being part of the noise rather than the gamma ray burst. Only when smoothed to a bin width of $2s$ does the burst become visually apparent. Therefore, the bin width is first chosen small enough to pick up short bursts, and geometrically spaced windows of size 1, 2, 4, 8, ... times the bin width, up to a maximum window size, are run over the data in order to pick up longer bursts. 

This paper develops an improved approach to detecting GRBs. First we show that using the Page-CUSUM statistic \cite[]{Page1954-ej,Page1955-zf}, and its extension to count data \cite[]{Lucas1985-gm} is uniformly better than using a window-based procedure. These schemes require specifying both the pre-change and post-change behaviour of the data - in our case specifying the background rate of photon arrivals and the rate during the gamma ray burst. While in our application it is reasonable to assume that good estimates of the background photon arrival rate are available by monitoring the signal, specifying the photon arrival rate for the gamma ray burst is difficult due to their heterogeneity in terms of intensity. For detecting changes in mean in Gaussian data, \cite{Romano2021-ab} show how one can implement the sequential scheme of \cite{Page1955-zf} simultaneously for all possible post-change means, and call the resulting algorithm FOCuS. Our detection algorithm involves extending this algorithm to the setting of detecting changes in the rate of events for count data. It is based on modelling the arrival of photons on the detector as a Poisson process, and we thus call our detection algorithm Poisson-FOCuS.

Our algorithm is equivalent to checking windows of any length, with a modified version equivalent to checking windows of any length up to a maximum size. This makes it advantageous for detecting bursts near the chosen statistical threshold whose length is not well described by a geometrically spaced window. In addition to this, the algorithm we develop has a computational cost lower than the geometric spacing approach, resulting in a uniform improvement on the methods already used for this application with no required trade-off. These advantages mean that the Poisson-FOCuS algorithm is currently planned to be used as part of the trigger algorithm of the HERMES satellite.

Our improvement of existing window based methods addresses the aspect of trigger algorithms that has been shown to be most important for increasing power of detecting GRBs. As the computational resource on-board a satellite has increased, trigger algorithms have grown to support an increasing number of criteria, and is has been seen that the most important aspect of any detection procedure is the timescale over which the data is analyzed \cite[]{2004mclean}. For example, while early software, such as Compton-BATSE \cite[]{Gehrels1993-ep}, operated only a few different trigger criteria, a total of $120$ are available to the Fermi-GBM \cite[]{2009meegan} and and more than $800$ to the Swift-BAT \cite[]{2004mclean} flight software. While in many cases, this growth in algorithm complexity did not result in more GRB detection, better coverage of different timescales for GRBs did \cite[]{2012paciesas}. During the first four years of Fermi-GBM operations, $135$ out of $953$ GRBs triggered GBM over timescales not represented by BATSE algorithms, most of which were over timescales larger than the maximum value tested by BATSE ($1.024$ s) \cite[]{2014kienlin}. 

The outline of the paper is as follows. In Section \ref{section:problem_setup} we define the mathematical setup of the problem and analyse the statistical models used in current and possible alternate approaches. Section \ref{section:functional_pruning} provides the main theoretical developments of a functional pruning approach, leading to an algorithm and computational implementation specified in Section \ref{section:theoretical_evaluation}. In Section \ref{section:empirical_evaluation} we give an evaluation of our method on various simulated data, and real data taken from the FERMI catalogue. The paper ends with a discussion. All proofs, and the calculations necessary for graphs, can be found in the Appendix.

\section{Modelling framework}
\label{section:problem_setup}

The data we consider takes the form of a time-series of arrival times of photons.  We can model the generating process for these points as a Poisson process with background parameter $\lambda(t)$, and with gamma ray bursts corresponding to periods of time which see a increase in the arrival rate over background level. 
Changes in the background rate $\lambda(t)$ over time may be due to rotation of the spacecraft or its orbit around the earth. They exist on a greater timescale (minutes to days) than the region of time over which an anomalous interval could occur (seconds). 

We assume that a good estimate of the current background rate $\lambda(t)$ is available. To ease exposition we will first assume this rate is known and constant, and denote it as $\lambda$ before generalising to the non-constant background rate in Section \ref{section:varying_background}. For readers interested in applying our method, we also discuss accounting for error in estimating the background rate in Appendix \ref{section:background_bias}.

There are two approaches to analysing our data. The first is to choose a small time interval, $w$, which is smaller than the shortest gamma ray burst that we want to detect. In this case the data can be summarised by the number of photon arrivals in time bins of length $w$. By rescaling our time unit measurement (and therefore our rate $\lambda$) appropriately we can set $w=1$ without loss of generality. The second approach is to use the data of times between photon arrivals directly. We will explain how our detection algorithm can be implemented in this latter setting in Section \ref{section:other_data}.

For the first setting, we will denote the the data by $x_1,x_2,\ldots,$ with $x_i$ denoting the number of arrivals in the $i$th time window. We use the notation $x_{t+1:t+h}=(x_{t+1},\ldots,x_{t+h})$ to denote the vector of observations between the $(t+1)$th and $(t+h)$th time window, and 
\[
\bar{x}_{t+1:t+h}=\frac{1}{h}\sum_{i=t+1}^{t+h} x_{i},
\]
the mean of these observations.
Under our model, if there is no gamma ray burst then each $x_i$ is a realisation of $X_i$, an independent Poisson distribution with parameter $\lambda$. If there is a gamma ray burst then the number of photon arrivals will be Poisson distributed with a rate larger than $\lambda$. We make the assumption that a gamma ray burst can be characterised by a width, $h$, and an intensity $\mu>1$ such that if the gamma ray burst starts at time $t+1$ then $x_{t+1},\ldots,x_{t+h}$ are realisations of independent Poisson random variables $X_{t+1},\ldots,X_{t+h}$ with mean $\mu\lambda$. See Figure \ref{fig:two_dimensions_anomalies_poisson} for a visualisation of an anomaly simulated directly from this model. 

Our algorithm is primarily interested in reducing the computation requirements of constant signal monitoring. Therefore our model considers a gamma ray burst as a uniform increase in intensity over its length, which does not take into account the unknown shape of a gamma ray burst. If a possible burst is found, an additional round of shape-based sanity checking requiring more computational resources can easily be performed prior to transmission back to Earth.

\begin{figure}[t]
    \centering
    \includegraphics[width=0.5\textwidth]{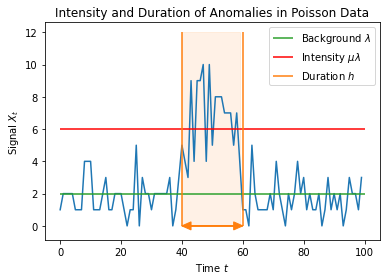}
    \caption{A simulated example anomaly with intensity multiplier $\mu=3$ and duration $h=20$ against a background $\lambda=2$. 
    }
    \label{fig:two_dimensions_anomalies_poisson}
\end{figure}

\subsection{Window-based methods and detectability}


If we assumed we knew the width of the gamma ray burst, $h$, then detecting it would correspond to testing, for each start time $t$, between the following two hypotheses:
\begin{itemize}
    \item $\mathbf{H}_0$: $X_{t+1},\ldots,X_{t+h} \sim \text{Poisson}(\lambda)$.
    \item $\mathbf{H}_1$: $ X_{t+1},\ldots,X_{t+h} \sim \text{Poisson}(\mu \lambda)$, for some $\mu>1$.
\end{itemize}

\noindent We can perform a likelihood ratio test for this hypothesis. Let $\ell(x_{t+1:t+h};\mu)$ denote the log-likelihood for the data $x_{t+1:t+h}$ under our Poisson model with rate $\mu\lambda$. Then the standard likelihood ratio statistic is
\[
LR=2\left\{\max_{\mu>1} \ell(x_{t+1:t+h};\mu)-\ell(x_{t+1:t+h};1) \right\}.
\]

\begin{restatable}{proposition}{lrderivation}

Under the alternative, our LR statistic is 0 if $\bar{x}_{t+1:t+h} \leq \lambda$, otherwise 

\[LR =  2h\lambda \left\{\frac{\bar{x}_{t+1:t+h}}{\lambda} \log \left(\frac{\bar{x}_{t+1:t+h}}{\lambda}\right) -  \left(\frac{\bar{x}_{t+1:t+h}}{\lambda}-1\right) \right\}. \]

\end{restatable}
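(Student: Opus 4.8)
The plan is to compute the constrained maximum likelihood estimator of $\mu$ directly and substitute it back into the likelihood ratio. First I would write out the Poisson log-likelihood explicitly. Setting $S = \sum_{i=t+1}^{t+h} x_i = h\,\bar{x}_{t+1:t+h}$, we have
\[
\ell(x_{t+1:t+h};\mu) = S\log(\mu\lambda) - h\mu\lambda - \sum_{i=t+1}^{t+h}\log(x_i!),
\]
so that the factorial term and the $\log\lambda$ contributions will cancel when we form the difference $\ell(x_{t+1:t+h};\mu) - \ell(x_{t+1:t+h};1)$.

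Next I would locate the unconstrained maximiser. Differentiating in $\mu$ gives $\partial_\mu\ell = S/\mu - h\lambda$, which vanishes at $\mu^\star = S/(h\lambda) = \bar{x}_{t+1:t+h}/\lambda$, and the second derivative $-S/\mu^2 \le 0$ confirms this is the global maximum of a concave function. I would then split into two cases according to the constraint $\mu > 1$. If $\bar{x}_{t+1:t+h} \le \lambda$ then $\mu^\star \le 1$, so $\ell$ is non-increasing on $\{\mu \ge 1\}$ and the supremum over the alternative is approached at $\mu = 1$; hence the LR statistic is $0$. If instead $\bar{x}_{t+1:t+h} > \lambda$ then $\mu^\star > 1$ lies in the admissible region and is the constrained maximiser.

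Finally, in the case $\bar{x}_{t+1:t+h} > \lambda$, I would substitute $\hat\mu = \bar{x}_{t+1:t+h}/\lambda$ and $S = h\,\bar{x}_{t+1:t+h}$ into $2\{\ell(x_{t+1:t+h};\hat\mu) - \ell(x_{t+1:t+h};1)\}$. After cancellation this leaves $2h\{\bar{x}_{t+1:t+h}\log(\bar{x}_{t+1:t+h}/\lambda) - \bar{x}_{t+1:t+h} + \lambda\}$, and factoring out $\lambda$ yields the stated expression. There is no substantive obstacle here — the argument is a routine one-parameter MLE computation; the only point requiring a little care is the boundary behaviour at $\mu = 1$ (a strict versus non-strict inequality in the alternative), which is why the statistic is defined to be exactly $0$ on $\{\bar{x}_{t+1:t+h} \le \lambda\}$.
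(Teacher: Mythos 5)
Your proposal is correct and follows essentially the same route as the paper: compute the MLE $\hat\mu = \bar{x}_{t+1:t+h}/\lambda$ and substitute it into twice the log-likelihood difference (the paper works with the likelihood of the aggregated count $h\bar{x}_{t+1:t+h}$ rather than the product over individual observations, but these differ only by terms constant in $\mu$, so the computation is identical). If anything your write-up is slightly more complete, since you verify concavity and explicitly justify the $\bar{x}_{t+1:t+h} \leq \lambda$ boundary case, which the paper's proof leaves implicit.
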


The likelihood ratio is a function only of the expected count $h\lambda$ and the fitted intensity $\hat{\mu}_{t+1:t+h} := \bar{x}_{t+1:t+h}/\lambda$ of the interval $[t+1,t+h]$. 
It can alternatively be written as a function only of the expected count $h\lambda$ and the actual count $h\bar{x}_{t+1:t+h}$, which forms the fundamental basis for our algorithm.

\begin{figure}
     \centering
     \begin{subfigure}[b]{0.5\textwidth}
         \centering
         \includegraphics[width=\textwidth]{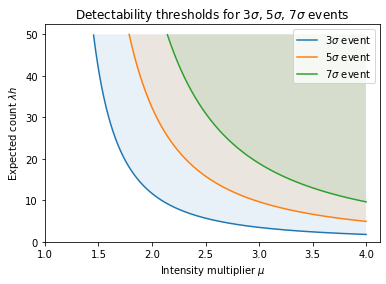}
     \end{subfigure}
     
     \caption{Detectability of GRBs at different $k$-sigma levels. Shaded regions show values of $h\lambda$ and $\hat{\mu}_{t+1:t+h}$ where the likelihood ratio exceeds $k$-sigma event thresholds for $k=3$ (blue region), $k=5$ (orange region) and $k=7$ (green region) for a test that uses the correct value of $h$.}
     \label{fig:detectability_thresholds}
\end{figure}

In our application, thresholds for gamma ray burst detection are often set based on $k$-sigma events: values that are as extreme as observing a Gaussian observation that is $k$ standard deviations from its mean. As the likelihood ratio statistic is approximately $\chi^2_1$ distributed \cite[]{Wilks1938-lh}, this corresponds to a threshold of $k^2$.

Gamma ray bursts with a combination of high intensity $\hat{\mu}_{t+1:t+h} := \frac{\bar{x}_{t+1:t+h}}{\lambda}$ and long length, as quantified by the expected count $h\lambda$, will have higher associated likelihood ratio statistics and thus be easier to detect. Figure \ref{fig:detectability_thresholds} shows regions in this two-dimensional space that correspond to detectable GRBs at different $k$-sigma levels. 
Figure \ref{fig:window_detectability} shows what happens when we set a fixed threshold and for computational reasons only check intervals of certain lengths. We rely on the fact that a slightly brighter burst will also trigger detection on a longer or shorter interval than optimal. This is the type of approach that current window-based methods take \cite[]{2012paciesas}. We can see that, even with a grid of window sizes, we lose detectability if the true width of the GRB does not match one of the window sizes.

\begin{figure}
     \centering
     \begin{subfigure}[b]{0.49\textwidth}
         \centering
         \includegraphics[width=\textwidth]{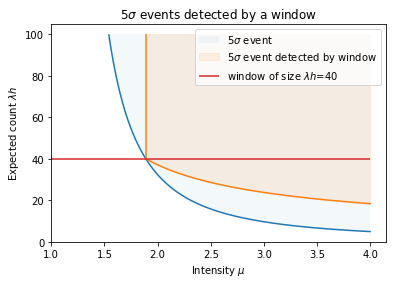}
     \end{subfigure}
     \hfill
     \begin{subfigure}[b]{0.49\textwidth}
         \centering
         \includegraphics[width=\textwidth]{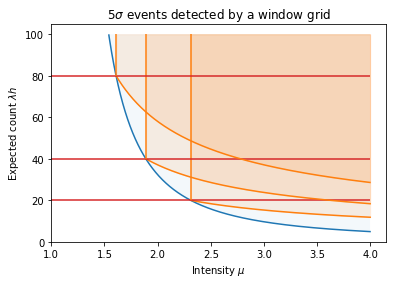}
     \end{subfigure}
     
     \caption{Detectability of GRBs by the window method, for one window (left) or a grid of three windows (right). Orange shaded area shows the values of $h\lambda$ and $\hat{\mu}_{t+1:t+h}$ where the likelihood ratio exceeds a 5-sigma threshold for the window test, and blue shaded area shows the detectability region form Figure \ref{fig:detectability_thresholds}. }
     \label{fig:window_detectability}
\end{figure}

\subsection{Page-CUSUM for Poisson data}

As a foundation for our detection algorithm, consider the CUSUM (cumulative sum) approach of \cite{Page1955-zf} that was adapted for the Poisson setting by \cite{Lucas1985-gm}. These methods search for gamma ray bursts of unknown width but known size $\mu$, differing from a window method that searches for gamma ray bursts of known width $h$ and unknown size. To run our methods online it is useful to characterise by the start point $\tau$ of a possible anomaly.
We have our hypotheses for the signal at time $T$:
\begin{itemize}
    \item $\mathbf{H}_0$: There have been no anomalies, i.e. $X_1, ..., X_T \sim \text{Poisson}(\lambda)$.
    \item $\mathbf{H}_1$: There has been one anomaly, beginning at some unknown time $\tau$, with known intensity multiplier $\mu>1$, i.e. $X_1, ..., X_{\tau-1} \sim \text{Poisson}(\lambda)$ and $X_{\tau}, ..., X_T \sim \text{Poisson}(\mu \lambda)$.
\end{itemize}

\begin{restatable}{proposition}{lrderivationpage}

Under the alternative, our LR statistic is 0 if $\bar{x}_{\tau:T} \leq \lambda \frac{\mu-1}{\log(\mu)}$ for all $\tau$, otherwise 

\[LR =  \max_{1 \leq \tau \leq T} \left[2(T-\tau+1)\lambda \left\{\frac{\bar{x}_{\tau:T}}{\lambda} \log \left(\mu \right) -  \left(\mu-1\right) \right\} \right]. \]

\end{restatable}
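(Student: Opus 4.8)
The plan is to compute the generalised log-likelihood ratio directly, exploiting the fact that under $\mathbf{H}_1$ the only free parameter is the changepoint $\tau$, since the intensity $\mu$ is assumed known. First I would write down the Poisson log-likelihood of the data $x_1,\dots,x_T$ under each hypothesis. Under $\mathbf{H}_0$ every observation has rate $\lambda$, so $\ell_0 = \sum_{i=1}^T\{x_i\log\lambda - \lambda - \log(x_i!)\}$. Under $\mathbf{H}_1$ with changepoint $\tau$, the first $\tau-1$ observations have rate $\lambda$ and the remaining $T-\tau+1$ have rate $\mu\lambda$, so $\ell_1(\tau) = \sum_{i<\tau}\{x_i\log\lambda-\lambda-\log(x_i!)\} + \sum_{i\ge\tau}\{x_i\log(\mu\lambda)-\mu\lambda-\log(x_i!)\}$.

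Next I would form the difference $\ell_1(\tau)-\ell_0$. The contributions from indices $i<\tau$ cancel exactly, as do all the $\log(x_i!)$ terms, leaving $\sum_{i=\tau}^T\{x_i\log\mu - (\mu-1)\lambda\}$, which collapses to $(T-\tau+1)\{\bar x_{\tau:T}\log\mu - (\mu-1)\lambda\}$. Multiplying by two and rewriting in terms of $\bar x_{\tau:T}/\lambda$ reproduces exactly the bracketed expression in the statement for a single value of $\tau$. This is the same telescoping idea used for the window-based Proposition, but note the key difference: here there is no maximisation over intensity, so the only optimisation is the discrete one over $\tau$ (which is precisely what later permits an online recursive update).

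Finally I would account for the generalised likelihood ratio being the supremum over the alternative of this quantity, noting that the possibility of "no anomaly yet" (equivalently $\tau = T+1$, contributing $0$) is included, so the statistic equals $\max\{0,\ \max_{1\le\tau\le T}(\cdot)\}$. Since $\mu>1$, dividing by $\log\mu$ shows that the per-$\tau$ term $\bar x_{\tau:T}\log\mu - (\mu-1)\lambda$ is nonnegative precisely when $\bar x_{\tau:T}\ge \lambda(\mu-1)/\log\mu$. Hence if this inequality fails for every $\tau$ then all per-$\tau$ terms are $\le 0$ and the maximum is attained at $0$; otherwise at least one term is strictly positive and the statistic equals the displayed maximum over $\tau$, giving the two-case description.

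I do not expect a genuine obstacle: the computation is elementary cancellation of Poisson log-likelihoods. The only points needing a little care are explaining cleanly why the statistic is floored at $0$ (a likelihood ratio statistic against a nested null is nonnegative by construction, equivalently the alternative contains the null as the limiting case $\tau=T+1$) and verifying that $\lambda(\mu-1)/\log\mu$ is exactly the break-even average count at which a candidate changepoint begins to contribute positively.
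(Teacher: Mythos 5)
Your proposal is correct and follows essentially the same route as the paper: write the Poisson log-likelihoods under $\mathbf{H}_0$ and $\mathbf{H}_1(\tau)$, cancel the pre-change and $\log(x_i!)$ terms to obtain $\sum_{t=\tau}^T\{x_t\log\mu-\lambda(\mu-1)\}$, and maximise over $\tau$. Your explicit treatment of the zero-floor case via the break-even condition $\bar{x}_{\tau:T}\le\lambda(\mu-1)/\log\mu$ is a small but welcome addition that the paper's own proof leaves implicit.
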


We work with a test statistic $S_T$ that is half the likelihood ratio statistic for this test, and compare to a $k$-sigma threshold of $k^2/2$. $S_T$ can be rewritten in the following form:
$$S_T = \left[\max_{1 \leq \tau \leq T} \sum_{t=\tau}^T(x_t \log (\mu) - \lambda (\mu-1))\right]^+,$$
where we use the notation $[\cdot]^+$ to denote the maximium of the term $\cdot$ and 0. As shown in \cite{Lucas1985-gm}, $S_T$ can be updated recursively as 
$$S_0 = 0, \ \ S_{T+1} = [S_T + x_{T+1} \log \mu - \lambda(\mu-1)]^+.$$ 

It is helpful to compare the detectability of GRBs using $S_T$ with their detectability using a window method. To this end, we introduce the following propositions that ....

\begin{restatable}{proposition}{pageonlywindow}
\label{prop:page_only_window}
For some choice of $\mu$ against a background rate of $\lambda$, let $S_T$ be significant at the $k$-sigma level. Then there exists some interval $[\tau,T]$ with associated likelihood ratio statistic that is significant at the $k$-sigma level.
\end{restatable}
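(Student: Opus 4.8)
The plan is to unwind the two notions of significance and reduce the claim to a one‑line optimisation. Suppose $S_T$ is significant at the $k$‑sigma level, i.e.\ $S_T \ge k^2/2$. From the form $S_T = \bigl[\max_{1\le\tau\le T}\sum_{t=\tau}^{T}(x_t\log\mu-\lambda(\mu-1))\bigr]^+$, there is a start point $\tau$ with $\sum_{t=\tau}^{T}\bigl(x_t\log\mu-\lambda(\mu-1)\bigr)\ge k^2/2$. Writing $h=T-\tau+1$ and $\hat\mu=\hat\mu_{\tau:T}=\bar x_{\tau:T}/\lambda$, and using $\sum_{t=\tau}^{T}x_t=h\bar x_{\tau:T}=h\lambda\hat\mu$, this rearranges to
\[
2h\lambda\bigl(\hat\mu\log\mu-(\mu-1)\bigr)\ \ge\ k^2 .
\]

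First I would note that since the left‑hand side is positive we have $\hat\mu\log\mu>\mu-1$; because $(\mu-1)/\log\mu>1$ for every $\mu>1$ (which follows from $\mu-1>\log\mu$, an elementary consequence of comparing derivatives at $\mu=1$), this forces $\hat\mu>1$. Hence the interval $[\tau,T]$, viewed as a window of width $h$, lies in the non‑trivial branch of the window likelihood‑ratio proposition, so its statistic is $LR_{\tau:T}=2h\lambda\bigl(\hat\mu\log\hat\mu-(\hat\mu-1)\bigr)$.

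The crux is then to compare this with the displayed lower bound, i.e.\ to show $\hat\mu\log\hat\mu-(\hat\mu-1)\ge\hat\mu\log\mu-(\mu-1)$. Equivalently, with $\hat\mu$ fixed, the map $g(\mu):=\hat\mu\log\mu-(\mu-1)$ is maximised at $\mu=\hat\mu$: indeed $g'(\mu)=\hat\mu/\mu-1$ vanishes only at $\mu=\hat\mu$ and $g''(\mu)=-\hat\mu/\mu^2<0$, so $\mu=\hat\mu$ is the global maximiser and $g(\mu)\le g(\hat\mu)$. Combining,
\[
LR_{\tau:T}\ =\ 2h\lambda\,g(\hat\mu)\ \ge\ 2h\lambda\,g(\mu)\ \ge\ k^2,
\]
so $[\tau,T]$ is significant at the $k$‑sigma level, which proves the proposition.

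I do not expect a genuine obstacle: the argument is simply that the fixed‑$\mu$ CUSUM score on a window can never exceed the $\mu$‑optimised likelihood‑ratio score on that same window. The only points needing care are the degenerate cases — verifying that a significant (hence positive) CUSUM score automatically puts us in the regime $\bar x_{\tau:T}>\lambda$ where the window $LR$ is given by the non‑trivial formula rather than being $0$ — and keeping the factors of $2$, $\lambda$ and $h$ consistent between the half‑likelihood normalisation used for $S_T$ and the full‑likelihood normalisation used for the window test.
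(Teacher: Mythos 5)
Your proof is correct and follows essentially the same route as the paper's: pick the start point $\tau$ attaining the CUSUM maximum, observe that the fixed-$\mu$ score on $[\tau,T]$ is a lower bound for the $\mu$-maximised window likelihood ratio on that same interval, and conclude. You simply make explicit two details the paper leaves implicit --- that $g(\mu)=\hat\mu\log\mu-(\mu-1)$ is maximised at $\mu=\hat\mu$, and that significance forces $\hat\mu>1$ so the window statistic is in its non-trivial branch.
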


\begin{restatable}{proposition}{pagebeatswindow}
\label{prop:page_beats_window}
For any $k$, $\lambda$ and $h$ there exists a $\mu$ and corresponding test statistic $S_T$ that relates directly to a window test of length $h$ and background rate $\lambda$ as follows: if, for any $t$, the data $x_{t+1:t+h}$ is significant at the $k$-sigma level then $S_{t+h}$ will also be significant at the $k$-sigma level.

\end{restatable}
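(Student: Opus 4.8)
The plan is to take $\mu$ to be precisely the \emph{detectability threshold} of the length-$h$ window test, and then to bound the Page--CUSUM statistic $S_{t+h}$ from below by keeping only the single candidate changepoint $\tau=t+1$ in its defining maximum. Throughout write $\hat\mu:=\hat\mu_{t+1:t+h}=\bar x_{t+1:t+h}/\lambda$ and $g(u):=u\log u-(u-1)$.

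First I would record the monotonicity I need. By the likelihood-ratio formula derived above, the window statistic for $x_{t+1:t+h}$ equals $2h\lambda\,g(\hat\mu)$. Since $g'(u)=\log u>0$ for $u>1$ and $g(1)=0$, there is a unique $u^{\star}>1$ with $2h\lambda\,g(u^{\star})=k^{2}$, and ``$x_{t+1:t+h}$ is significant at the $k$-sigma level'' is exactly the statement $\hat\mu\ge u^{\star}$. I then fix $\mu:=u^{\star}$ for the Page--CUSUM statistic; this depends only on $k,\lambda,h$, as the statement allows.

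Next, for this fixed $\mu$, I would drop all terms in the maximum defining $S_{t+h}$ except $\tau=t+1$, use $[\,\cdot\,]^{+}\ge\cdot$, and substitute $\sum_{i=t+1}^{t+h}x_i=h\lambda\hat\mu$:
\[
S_{t+h}\ \ge\ \sum_{i=t+1}^{t+h}\bigl(x_i\log\mu-\lambda(\mu-1)\bigr)\ =\ h\lambda\bigl(\hat\mu\log\mu-(\mu-1)\bigr).
\]
The right-hand side is increasing and linear in $\hat\mu$ with positive slope $h\lambda\log\mu$, so if the window is significant, i.e.\ $\hat\mu\ge u^{\star}=\mu$, then the right-hand side is at least its value at $\hat\mu=\mu$, namely $h\lambda\bigl(\mu\log\mu-(\mu-1)\bigr)=h\lambda\,g(u^{\star})=k^{2}/2$. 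Since $k^{2}/2$ is exactly the $k$-sigma threshold for $S_{t+h}$, this completes the argument.

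The one genuine subtlety --- and the reason the statement asserts the existence of a suitable $\mu$ rather than holding for every $\mu$ --- is pinning down the constant. The tempting choice $\mu=\hat\mu$, for which the $\tau=t+1$ term would equal half the window likelihood ratio, is not permitted because it depends on the data. Noting that $\mu\mapsto u^{\star}\log\mu-(\mu-1)$ is strictly concave and maximised at $\mu=u^{\star}$ shows that $\mu=u^{\star}$ is the unique constant making the bound tight exactly on the window's detectability boundary $\hat\mu=u^{\star}$; I would include this as a remark to motivate the choice, though it is not needed for the proof itself.
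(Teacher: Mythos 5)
Your proposal is correct and follows essentially the same route as the paper's own proof: you pick the same $\mu$ (the unique root of $2h\lambda[\mu\log\mu-(\mu-1)]=k^2$), deduce $\hat\mu\ge\mu$ from monotonicity of $u\mapsto u\log u-(u-1)$ on $u>1$, and lower-bound $S_{t+h}$ by the single term $\tau=t+1$. The closing remark on concavity and tightness at the detectability boundary is a nice addition but, as you say, not needed.
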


Together these results show that Page's method is at least as powerful as the window method. Rather than implementing the window method with a given window size, we can implement Page's method with the appropriate $\mu$ value (as defined by Proposition \ref{prop:page_beats_window}) such that any GRB detected by the window method would be detected by Page's method.  
However Page's method may detect additional GRBs and these would be detected by the window method with some window size (by Proposition \ref{prop:page_only_window}). In practice, as shown in Figure \ref{fig:page_window_comparison}, Page's method provides better coverage of the search space.

\begin{figure}
     \centering
     \begin{subfigure}[b]{0.49\textwidth}
         \centering
         \includegraphics[width=\textwidth]{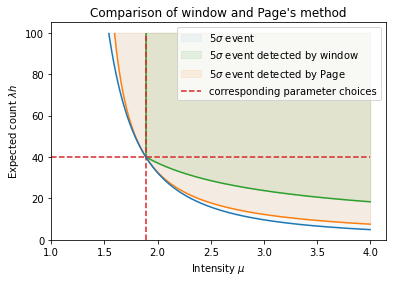}
         \caption{}
         \label{fig:page_window_comparison}
     \end{subfigure}
     \hfill
     \begin{subfigure}[b]{0.49\textwidth}
         \centering
         \includegraphics[width=\textwidth]{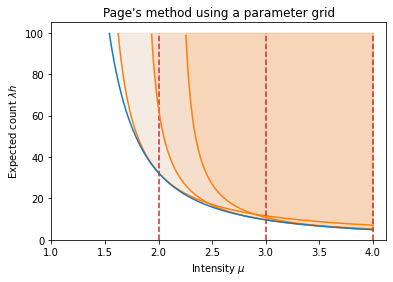}
         \caption{}
         \label{fig:page_grid}
     \end{subfigure}
     
     \caption{Detectability of GRBs by Page-CUSUM for a single $\mu$ value (left) and a grid of $\mu$ values (right).  Orange shaded area shows the values of $h\lambda$ and $\hat{\mu}_{t+1:t+h}$ where the likelihood ratio exceeds a 5-sigma threshold for the Page-CUSUM ($\mu$ value shown by red vertical line). For comparison, the green shaded area shows the detectability region for the corresponding window test as defined by Proposition \ref{prop:page_beats_window} (left-hand plot), and blue shaded area shows detetability region from Figure \ref{fig:detectability_thresholds}.}
     \label{fig:pages_method}
\end{figure}

Whilst the Page-CUSUM approach is more powerful than a window-based approach, to cover the space completely it still requires specifying a grid of values for the intensity of the gamma ray burst as in Figure \ref{fig:page_grid}. If the actual intensity lies far from our grid values we will lose power at detecting the burst. 

\section{Functional pruning}
\label{section:functional_pruning}

To look for an anomalous excess of count of any intensity and width without having to pick a parameter grid, we consider computing the Page-CUSUM statistic simultaneously for all $\mu\in [1, \infty)$, which we can do by considering the test statistic as a function of the $\mu$,  $S_T(\mu)$. That is, for each $T$, $S_T(\mu)$ is defined for $\mu\in [1, \infty)$, and for a given $\mu$ is equal to value of the Page-CUSUM statistic for that $\mu$. 

By definition, $S_T(\mu)$ is a pointwise maximum of curves representing all possible anomaly start points $\tau$:
$$S_T(\mu) := \left[\max_{1 \leq \tau \leq T} \sum_{t=\tau}^T\left[x_t \log (\mu) - \lambda (\mu-1)\right]\right]^+$$
We can view this as $S_T(\mu)=[\max_{1 \leq \tau \leq T} C^{(T)}_\tau(\mu)]^+$, where each curve $C^{(T)}_{\tau}(\mu)$ corresponds to half the likelihood ratio statistic for a gamma ray burst of intensity $\mu$ starting at $\tau$,
$$C^{(T)}_{\tau}(\mu) := \sum_{t=\tau}^T[x_t \log (\mu) - \lambda (\mu-1)] .$$

\noindent Each curve is parameterised by two quantities, as
$$C^{(T)}_{\tau}(\mu) := a^{(T)}_{\tau}\log (\mu) - {b^{(T)}_\tau}(\mu-1),$$
where $a^{(T)}_{\tau} = \sum_{t=\tau}^T x_t$ is the actual observed count and $b^{(T)}_{\tau} = \sum_{t=\tau}^T \lambda = \lambda(T-\tau+1)$ is the expected count on the interval $[\tau, T]$. As we move from time $T$ to time $T+1$ there is a simple recursion to update these coefficients
$$ a^{(T+1)}_{\tau} =a^{(T)}_{\tau} + x_{T+1}, \ \ \ b^{(T+1)}_{\tau} = b^{(T)}_{\tau} + \lambda .$$
These are linear and do not depend on $\tau$, so the differences between any two curves are preserved with time updates.

\begin{figure}
     \centering
     \begin{subfigure}[b]{0.49\textwidth}
         \centering
         \includegraphics[width=\textwidth]{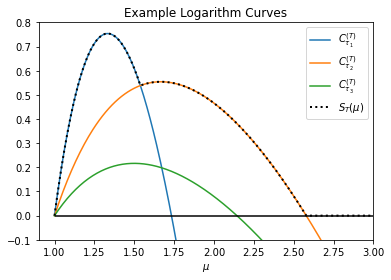}
     \end{subfigure}
     
     \caption{Three example logarithm curves. The statistic $S_T(\mu)$ is defined as the maximum of all logarithm curves and the $0$ line.}
     \label{fig:example_logarithm_curves}
\end{figure}

\subsection{Structure of logarithm curves}

We call $C^{(T)}_{\tau}(\mu)$ a logarithm curve.
Figure \ref{fig:example_logarithm_curves} shows the structure of what these logarithmic curves look like. Note that if the observed count $a_{\tau}^{(T)}$ is not greater than the expected count $b_{\tau}^{(T)}$, the curve $C_{\tau}^{(T)}$ will be non-positive on $\mu \in [1, \infty)$ as it contains no evidence for a change greater than the background rate $\lambda$ over $[\tau, T]$. The maximum of $C_{\tau}^{(T)}$ is located at $\mu=a_{\tau}^{(T)}/b_{\tau}^{(T)}$, representing the likelihood ratio for a post-change mean $\mu \lambda = \bar{x}_{\tau:T}$.

If $a_\tau^{(T)}>b_\tau^{(T)}$ then the logarithm curve will be positive for some $\mu>1$. In this case we will define the root of the curve to be the unique $\mu^*>1$ such that $C^{(T)}_{\tau}(\mu^*)=0$.

\subsection{Adding and pruning curves}

For any two curves $C^{(T)}_{\tau_i}$ and $C^{(T)}_{\tau_j}$ at a given present time $T$, we will say that $C^{(T)}_{\tau_i}$ dominates $C^{(T)}_{\tau_j}$ if 
$$[C^{(T)}_{\tau_i}(\mu)]^+ \geq [C^{(T)}_{\tau_j}(\mu)]^+,\ \ \ \forall \mu \in [1, \infty).$$
This is equivalent to saying that there is no value of $\mu$ such that the interval $[\tau_j, T]$ provides better evidence for an anomaly with intensity $\mu$ than $[\tau_i, T]$. As the difference between curves is unchanged as we observe more data, this in turn means that for any future point $T_F \geq T$, the interval $[\tau_j, T_F]$ will not provide better evidence than $[\tau_i, T_F]$. Therefore, the curve associated with $\tau_j$ can be pruned, removed from our computational characterisation of $S_T(\mu)$.

The following gives necessary and sufficient conditions for a curve to be dominated by another.

\begin{proposition}
Let $C^{(T)}_{\tau_i}$ and $C^{(T)}_{\tau_j}$ be curves that are positive somewhere on $\mu \in [1, \infty)$, where $\tau_i < \tau_j$ and $C_{\tau_i}^{(\tau_j-1)}$ is also positive somewhere on $\mu \in [1, \infty)$.

Then $C^{(T)}_{\tau_i}$ dominates $C^{(T)}_{\tau_j}$ if and only if $a_{\tau_j}^{(T)} / b_{\tau_j}^{(T)} \leq a_{\tau_i}^{(\tau_j-1)} / b_{\tau_i}^{(\tau_j-1)}$ or equivalently $a^{(T)}_{\tau_j} / b^{(T)}_{\tau_j} \leq a^{(T)}_{\tau_i} / b^{(T)}_{\tau_i}$. Additionally, it cannot be the case that $C_{\tau_j}^{(T)}$ dominates $C_{\tau_i}^{(T)}$.
\end{proposition}

A formal proof is given in Appendix \ref{sec:conditionsforpruning}, but we see intuitively why this result holds by looking at Figure \ref{fig:example_logarithm_curves}. We see that $C^{(T)}_{\tau_i}$ dominates $C^{(T)}_{\tau_j}$ precisely when $C^{(T)}_{\tau_i}$ has both a greater slope at $\mu=1$ (which occurs when $C_{\tau_i}^{(\tau_j-1)}$ is positive) and a greater root than $C^{(T)}_{\tau_j}$, where (as shown in the proof) the root of a curve $C_{\tau}^{(T)}$ is an increasing function of $a_{\tau}^{(T)}/b_{\tau}^{(T)}$.

The quantity $a_{\tau_i}^{(\tau_j-1)} / b_{\tau_i}^{(\tau_j-1)}$ does not change with time updates. Therefore, functional pruning occurs after periods of unpromising data where we have $x_t \approx \lambda$, bringing the ratios $a_{\tau}^{(T)}/b_{\tau}^{(T)}$ for curves down closer to $1$. This causes curves testing shorter more intense anomalies to be dominated by curves testing longer, less intense ones.

\section{Algorithm and theoretical evaluation}
\label{section:theoretical_evaluation}

Using this result we obtain the Poisson-FOCuS algorithm, described in Algorithm \ref{algorithm:focus_poisson}. This algorithm stores a list of curves in time order by storing their associated $a$ and $b$ parameters, as well as their times of creation $\tau$, which for the constant $\lambda$ case can be computed as $T+1-b/\lambda$.

On receiving a new observation at time $T$, these parameters are updated. If the observed count exceeds the expected count predicted by the most recent curve we also add a new curve which corresponds to a GRB that starts at time $T$. Otherwise we check to see if we can prune the most recent curve. This pruning step uses Proposition \ref{prop:bounded_curves}, which shows that if any currently stored curve can be pruned, the most recently stored curve will be able to be pruned. (Our pruning check does not does not need to be repeated for additional curves, as on average less than one curve is pruned at each timestep.)

The final part of the algorithm is to find the maximum of each curve, and check if the maximum of these is greater than the threshold. If it is, then we have detected a GRB. The start of the detected GRB is given by the time that the curve with the largest maximum value was added.

\begin{algorithm}[]
\SetAlgoLined
\KwResult{Startpoint, endpoint, and significance level of an interval above a $k$-sigma threshold.}
 initialise threshold $k^2/2$\; 
 initialise empty curve list\;
 \While{anomaly not yet found}{
 $ T \leftarrow T+1$\;
 get actual count $X_T$\;
 get expected count $\lambda$\;
 \tcp{update curves:}
 \For{curve $C_{\tau_i}^{(T-1)}$ in curve list $[C_{\tau_1}^{(T-1)}, ..., C_{\tau_n}^{(T-1)}]$}{
 $a_{\tau_i}^{(T)} \leftarrow a_{\tau_i}^{(T-1)}+X_T$;
 $b_{\tau_i}^{(T)} \leftarrow b_{\tau_i}^{(T-1)}+\lambda$;
 }
 \tcp{add or prune curve:}
  \uIf{$X_T/\lambda > \max[a_{\tau_n}^{(T)}/b_{\tau_n}^{(T)}, 1]$}{
    add $C_T^{(T)}: a_{T}^{(T)}=X_T, b_{T}^{(T)} = \lambda, \tau=T$ to curve list\;}
  \ElseIf{$a_{\tau_n}^{(T)}/b_{\tau_n}^{(T)} < \max[a_{\tau_{n-1}}^{(T)}/b_{\tau_{n-1}}^{(T)}, 1]$}{
    remove $C_{\tau_n}^{(T)}$ from curve list\;}
  
  \tcp{calculate maximum $M$:}
  \For{curve $C_{\tau_i}^{(T)}$ in curve list}{
    \If{$\max(C_{\tau_i}^{(T)}) > M$}{
    $M \leftarrow \max(C_{\tau_i}^{(T)})$\;
    $\tau^* \leftarrow \tau_i$
   }
   }
  \If{$M > k^2/2$}{
   anomaly found on interval $[\tau^*, T]$ with sigma significance $\sqrt{2M} > k$\;
   }
 }
 \caption{Poisson-FOCuS for constant $\lambda$}
 \label{algorithm:focus_poisson}
\end{algorithm}

\subsection{Dealing with varying background rate}
\label{section:varying_background}

Algorithm \ref{algorithm:focus_poisson} deals with the constant $\lambda$ case. If $\lambda = \lambda(t)$ is not constant, but an estimate of $\lambda(T)$ is available at each timestep $T$, we can apply the same principle but with a change in the definition of $b^{(T)}_{\tau}$. We now have $b^{(T)}_{\tau} := \sum_{t=\tau}^T \lambda(t)$, the total expected count over the interval $[\tau, T]$. For the algorithm, this impacts how the co-efficients are updated, with the new updates being
$$ a^{(T+1)}_{\tau} \leftarrow a^{(T)}_{\tau} + X_{T+1}, \ \ \ b^{(T+1)}_{\tau} \leftarrow b^{(T)}_{\tau} + \lambda(T+1) .$$
If we work with a non-homogeneous Poisson process in this way, it becomes impossible to recover $\tau$ from the coefficients $a^{(T)}_{\tau}$ and $b^{(T)}_{\tau}$, so $C_{\tau}^{(T)}$ must be computationally stored as the triplet $(\tau, a^{(T)}_{\tau}, b^{(T)}_{\tau})$.

The Poisson-FOCuS algorithm gives us an estimate of the start point of a GRB by reporting the interval $[\tau^*, T]$ over which an anomaly is identified. In our application, if the additional sanity checking indicates a GRB is present, the whole signal starting some time before $\tau^*$ is then recorded and transmitted from the spacecraft to Earth for a period of time. After this has occurred, Poisson-FOCuS can restart immediately provided that a good background rate estimate is available.

\begin{figure}
     \centering
     \begin{subfigure}[b]{0.49\textwidth}
         \centering
         \includegraphics[width=\textwidth]{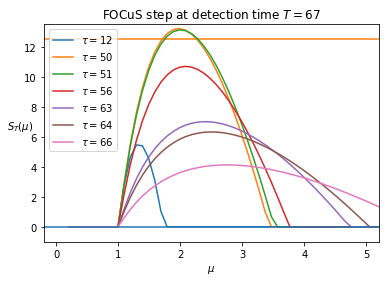}
     \end{subfigure}
     \hfill
     \begin{subfigure}[b]{0.49\textwidth}
         \centering
         \includegraphics[width=\textwidth]{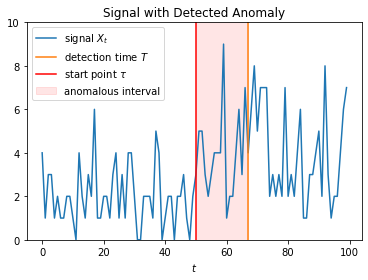}
     \end{subfigure}

     \caption{The FOCuS algorithm running on a random Poisson signal with background $\lambda=2$, with an anomaly of intensity $\mu \lambda = 4$ developing at start point $\tau=50$ and being detected at $T=67$.}
     \label{fig:poisson_focus}
\end{figure}

Figure \ref{fig:poisson_focus} shows an image of this algorithm running on a Poisson data signal. The Poisson-FOCuS algorithm correctly identifies the start point of the anomaly, as well as reporting a detection as soon as evidence crosses the $5$-sigma significance threshold.

\subsection{Minimum $\mu$ value}

For our application there is an upper limit on the length of a gamma ray burst. It thus makes sense to ensure we do not detect gamma ray bursts that are longer than this limit. 

To do so, we set an appropriate $\mu_{\text{min}}$, and additionally prune curves which only contribute to $S_T(\mu)$ on $1 < \mu < \mu_{\text{min}}$, by removing, or not adding, curves $C_{\tau}^{(T)}$ to the list if $C_{\tau}^{(T)}(\mu_{\text{min}}) \leq 0$, i.e.

\[ \frac{a^{(T)}_{\tau}}{b^{(T)}_{\tau}} \leq \frac{\mu_{\text{min}}-1}{\log \mu_{\text{min}}}.\]

\noindent We can choose $\mu_{\text{min}}$ according to our significance threshold and the maximum expected count we are interested in searching for bursts over, using the proof of Proposition \ref{prop:page_beats_window} about detectability for the window-based method, as follows:

\[ (h\lambda)_{\text{max}} = \frac{k^2}{2[\mu_{\text{min}} \log(\mu_{\text{min}}) - (\mu_{\text{min}}-1)]}.\]

\noindent For a $5$-sigma significance threshold, assuming a background rate of one photon every  $500\mu$s, a maximum length of 1 minute for a GRB would correspond to $\mu_{\text{min}}=1.015$, while a maximum length of 1 hour would have $\mu_{\text{min}}=1.002$.

\subsection{Using time-to-arrival data}
\label{section:other_data}

Rather than taking as data the number of photons observed in each time window, we can take as our data the time between each observation. In this case our data is $U_1,U_2,\ldots$ where $U_i$ is the time between the $(i+1)$th and $i$th photons. Under the assumption the data follows a Poisson process, we have that the $U_i$ are independently Exponentially distributed.

\begin{restatable}{proposition}{exponentialdata}
The Poisson-FOCuS algorithm still works in the Exponential case, with the only difference being how we update the co-efficients of the curves.

$$ a^{(T+1)}_{\tau} = a^{(T)}_{\tau} + 1, \ \ \ b^{(T+1)}_{\tau} = b^{(T)}_{\tau} + \lambda(T\!+\!1) U_{T+1},$$
where $\lambda(T)$ is the estimate of the backround rate at the time of the $T$th photon arrival.
\end{restatable}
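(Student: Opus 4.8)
The plan is to re-derive the logarithm-curve representation of Section~\ref{section:functional_pruning} directly from the exponential likelihood, and then to observe that every subsequent structural result — the shape of the curves, the domination/pruning criterion, the bound on the curve list, and hence Algorithm~\ref{algorithm:focus_poisson} itself — depends only on the functional form of the curves and on the way their two coefficients update, neither of which is altered by passing from binned counts to inter-arrival times.

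First I would write down the likelihood. If the observations $U_t$ that fall inside a putative burst starting at $\tau$ are independent with density $r\,e^{-ru}$ at rate $r=\mu\lambda(t)$, while under $\mathbf{H}_0$ the same observations have rate $\lambda(t)$, then the per-observation log-likelihood-ratio contribution telescopes neatly: $[\log(\mu\lambda(t))-\mu\lambda(t)U_t]-[\log\lambda(t)-\lambda(t)U_t]=\log\mu-(\mu-1)\lambda(t)U_t$. Summing over $t=\tau,\ldots,T$ gives, for the half-likelihood-ratio curve attached to start point $\tau$,
\[
C^{(T)}_{\tau}(\mu)=\Big(\sum_{t=\tau}^{T}1\Big)\log\mu-\Big(\sum_{t=\tau}^{T}\lambda(t)U_t\Big)(\mu-1)=a^{(T)}_{\tau}\log\mu-b^{(T)}_{\tau}(\mu-1),
\]
with $a^{(T)}_{\tau}=T-\tau+1$ the number of inter-arrival observations on $[\tau,T]$ and $b^{(T)}_{\tau}=\sum_{t=\tau}^{T}\lambda(t)U_t$ the integrated background rate over the span those observations cover, i.e.\ the expected photon count there. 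This is exactly the form $a\log\mu-b(\mu-1)$ used throughout Section~\ref{section:functional_pruning}, and maximising over $\mu$ reproduces the same closed form with fitted intensity $\hat\mu=a^{(T)}_{\tau}/b^{(T)}_{\tau}$.

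Next I would read off the recursion and invoke the abstract machinery. Passing from $T$ to $T+1$ adds the single term $\log\mu-(\mu-1)\lambda(T+1)U_{T+1}$ to every curve, i.e.\ $a^{(T+1)}_{\tau}=a^{(T)}_{\tau}+1$ and $b^{(T+1)}_{\tau}=b^{(T)}_{\tau}+\lambda(T+1)U_{T+1}$, which is the only claim of the proposition. Crucially this update is additive and independent of $\tau$, so the differences $C^{(T)}_{\tau_i}-C^{(T)}_{\tau_j}$ are preserved under time updates exactly as in the count case. Everything the algorithm relies on — the location of the maxima and roots of the logarithm curves, the necessary-and-sufficient domination condition $a^{(T)}_{\tau_j}/b^{(T)}_{\tau_j}\le a^{(T)}_{\tau_i}/b^{(T)}_{\tau_i}$, the fact that it suffices to test only the most recently added curve for pruning, and the recursive update of $S_T(\mu)$ — is a formal consequence of (i) the curves having the form $a\log\mu-b(\mu-1)$ with $a,b\ge0$ and (ii) this $\tau$-independent additive coefficient update. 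Both now hold, so Algorithm~\ref{algorithm:focus_poisson} and all of its guarantees transfer verbatim with $b$ incremented by $\lambda(T+1)U_{T+1}$ in place of $\lambda(T+1)$; and since $a/b$ is still an observed-over-expected count, the detectability analysis of Section~\ref{section:problem_setup} (in particular Proposition~\ref{prop:page_beats_window}) is unchanged.

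I expect the only genuine subtlety — and the point I would spell out carefully in the full proof — to be the bookkeeping of which inter-arrival observations inherit the burst rate $\mu\lambda(t)$ under $\mathbf{H}_1$, so that the index set of the sum defining $C^{(T)}_{\tau}$ matches the ``burst starts at photon $\tau$'' convention the algorithm uses, together with verifying that the time-varying $\lambda(t)$ genuinely separates into the single scalar $b^{(T)}_{\tau}$ rather than leaking into the $\mu$-dependence. Once that is confirmed, the remainder is a direct transcription of the varying-$\lambda$ Poisson argument of Section~\ref{section:varying_background}, with $\lambda(t)$ there replaced by $\lambda(t)U_t$ here.
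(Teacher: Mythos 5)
Your proposal is correct and follows essentially the same route as the paper's own proof: write the exponential log-likelihood ratio per inter-arrival observation, obtain $\log\mu-(\mu-1)\lambda(t)U_t$, sum over $t=\tau,\ldots,T$ to recover the form $a\log\mu-b(\mu-1)$ with $a=T-\tau+1$ and $b=\sum_{t=\tau}^{T}\lambda(t)U_t$, and read off the coefficient updates. The extra discussion of why the pruning and domination machinery transfers is a welcome elaboration but not a different argument.
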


\subsection{Computational cost comparisons}

Using a window method, our computational costs per window consist of: adding $x_T$ and $\lambda_T$ to the window; removing $x_{T-h}$ and $\lambda_{t-h}$ from the window; calculating the test statistic and comparing to the threshold. Using Poisson-FOCuS, our computational costs per curve are: adding $x_T$ to $a_{\tau}^{(T)}$; adding $\lambda_T$ to $b_{\tau}^{(T)}$; calculating the maximum of the curve and comparing to the threshold. The computational cost per curve is therefore roughly equal to the computational cost per window. Thus when evaluating the relative computational cost of Poisson-FOCuS versus a window method it is required to calculate the expected number of curves kept by the algorithm at each timestep, and compare against the number of windows used. We now give mathematical bounds on this quantity, as follows:

\begin{restatable}{proposition}{logcurves}
The expected number of curves kept by Poisson-FOCuS without $\mu_{\text{min}}$ at each timestep $T$ is $\in [\frac{\log(T)}{2}, \frac{\log(T)+1}{2}]$.
\end{restatable}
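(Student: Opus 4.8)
The plan is to fix $T$, write $N_T=\sum_{\tau=1}^{T}\mathbf 1(A_{\tau,T})$ with $A_{\tau,T}$ the event that the curve started at time $\tau$ is still in the list at time $T$, and compute $\mathbb E[N_T]=\sum_{\tau=1}^T\mathbb P(A_{\tau,T})$ under the null model, where the centred observations $Y_t=x_t-\lambda$ are i.i.d.\ and mean zero. The target is an interval of width $\tfrac12$ containing $\tfrac12\sum_{m=1}^T\tfrac1m=\tfrac12 H_T$, and since $H_T=\sum_{m=1}^T 1/m$ is the expected number of records of a random permutation and lies in $[\log T,\log T+1]$, this signals that the proof should identify $N_T$ with a "record''-type count and then show that a crossing probability equal to (essentially) $\tfrac12$ halves it.

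First I would use the dominance proposition together with Proposition \ref{prop:bounded_curves} to describe the list combinatorially: at time $T$ it is a chain of start times along which $R^{(T)}_\tau=a^{(T)}_\tau/b^{(T)}_\tau=\bar x_{\tau:T}/\lambda$ is strictly increasing, and $\tau$ belongs to it precisely when $\bar x_{\tau:T}>\lambda$, $\bar x_{\tau:T}>\bar x_{\tau':T}$ for every $\tau'<\tau$, and the curve was actually created at time $\tau$. Using the mediant identity $\bar x_{\tau:T}>\bar x_{\tau':T}\iff\bar x_{\tau:T}>\bar x_{\tau':\tau-1}$ and reversing the sequence, with $m=T-\tau+1$ and $S_m=Y_T+Y_{T-1}+\dots+Y_{T-m+1}$, the retention event becomes: the running average $S_m/m$ of a centred i.i.d.\ walk is positive and strictly exceeds every later partial average $S_{m'}/m'$ ($m'>m$), subject to the creation constraint.

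This is a classical fluctuation quantity. A cycle-lemma / Sparre--Andersen argument on the increments $Y_{T-m+1},\dots,Y_T$ (counting, over their cyclic relabellings, how often a given relabelling leaves the running average on top) reduces $\mathbb P(A_{\tau,T})$ to $\tfrac1m\mathbb P(S_m>0)$ up to corrections from ties (the data are integer valued) and from the creation/lazy-pruning bookkeeping, so that $\mathbb E[N_T]=\sum_{m=1}^T\tfrac1m\mathbb P(S_m>0)+(\text{corrections})$. Since $Y_t$ has mean zero and the Poisson family (equivalently, in the time-to-arrival version, the Gamma family) is right-skewed, $\mathbb P(S_m>0)\le\tfrac12$ for all $m$ --- for Poisson this is the Ramanujan-type median bound $\mathbb P(\mathrm{Pois}(\mu)>\mu)<\tfrac12$ --- and a Berry--Esseen/Edgeworth estimate gives $\tfrac12-\mathbb P(S_m>0)=O(m^{-1/2})$, so $\sum_m\tfrac1m\mathbb P(S_m>0)$ deviates from $\tfrac12 H_T$ by a bounded amount; combining with $\log T\le H_T\le\log T+1$ and controlling the correction terms yields $\mathbb E[N_T]\in[\tfrac12\log T,\tfrac12(\log T+1)]$.

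The main obstacle is the fluctuation identity together with the attendant bookkeeping: the running averages $S_m/m$ are strongly dependent, so one cannot simply invoke the expected number of records of a random permutation, and a genuine cyclic-symmetrisation argument is needed; and one must control the corrections due to ties in the integer-valued data and to the algorithm's "created but not yet pruned'' curves tightly enough to land inside an interval of width only $\tfrac12$, rather than merely obtaining $\mathbb E[N_T]=\tfrac12\log T+O(1)$. I expect the cleanest route is to establish the identity first in the continuous time-to-arrival model, where scale invariance makes the count depend on $T$ alone and there are no ties, and then to bound the discrepancy with the binned-count setting.
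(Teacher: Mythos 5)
Your strategy is in substance the same as the paper's: both reduce the expected curve count to a Sparre--Andersen fluctuation functional of the centred walk $Z_t-t\lambda$. The paper characterises the surviving start points as the vertices of the greatest convex minorant of $(Z_t-t\lambda)_{t\le T}$ lying to the right of its minimum, invokes Sparre Andersen's theorem to get expected total vertex count $\sum_{t\le T}1/t$, and then asserts ``by symmetry'' that half of these are on the right-hand side. Your per-lag cycle-lemma computation, $\mathbb{E}[N_T]=\sum_{m=1}^T\tfrac1m\,\mathbb{P}(S_m>0)+(\text{corrections})$, is exactly the same quantity computed face by face: since the slopes of the minorant are increasing, a vertex lies on the right-hand side precisely when the face to its right has positive slope, and the expected number of faces of length $m$ with positive slope is $\tfrac1m\mathbb{P}(S_m>0)$. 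So the two routes coincide, and the paper's symmetry step is precisely the assertion that $\mathbb{P}(S_m>0)=\tfrac12$ for every $m$.

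That assertion is where the genuine gap sits, and you have correctly located it but not closed it, and your final inference is not valid as written: knowing that $\sum_m\tfrac1m\mathbb{P}(S_m>0)$ differs from $\tfrac12 H_T$ by a \emph{bounded} amount does not place it in an interval of width $\tfrac12$. Worse, the deviation is one-sided and does not vanish: for centred Poisson sums $\tfrac12-\mathbb{P}(S_m>0)\asymp(\lambda m)^{-1/2}$, so the deficit $\sum_m\tfrac1m\bigl(\tfrac12-\mathbb{P}(S_m>0)\bigr)$ converges to a strictly positive constant of order $\lambda^{-1/2}$ (for $\lambda=1$ the $m=1$ term alone contributes $\tfrac12-\mathbb{P}(\mathrm{Poisson}(1)\ge 2)\approx 0.24$, already comparable to the available slack $\tfrac12(H_T-\log T)\le\tfrac12$). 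Your argument therefore delivers the upper bound $\mathbb{E}[N_T]\le\tfrac12 H_T\le\tfrac12(\log T+1)$ but cannot deliver the stated lower bound without quantifying the compensating lattice/tie corrections, which you leave open. You should know that the paper's own proof suppresses exactly the same difficulty: it applies a theorem stated for continuous increments to integer-valued data and uses a left--right symmetry of the minorant that holds only for symmetric increment laws. In that sense your proposal is not a wrong approach but the paper's approach written out carefully enough that the unproved step becomes visible; to finish honestly one must either control the skewness and tie corrections explicitly or settle for the weaker conclusion $\mathbb{E}[N_T]=\tfrac12\log T+O(1)$.
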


\begin{restatable}{proposition}{boundedcurves}
The expected number of curves kept by Poisson-FOCuS using some $\mu_{\text{min}} > 1$ at each timestep is bounded.
\label{prop:bounded_curves}
\end{restatable}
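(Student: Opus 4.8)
The plan is to bound, uniformly in $T$, the number of start points $\tau$ whose curve can possibly survive the extra $\mu_{\text{min}}$ pruning, and then to control its expectation by a large-deviations estimate. Write $\rho:=(\mu_{\text{min}}-1)/\log\mu_{\text{min}}$, and note $\rho>1$ whenever $\mu_{\text{min}}>1$, since $(\mu-1)/\log\mu$ is strictly increasing on $(1,\infty)$ and tends to $1$ as $\mu\downarrow 1$. By the definition of the $\mu_{\text{min}}$ rule, $C^{(T)}_\tau$ is added/kept only when $a^{(T)}_\tau/b^{(T)}_\tau>\rho$, i.e.\ $\bar x_{\tau:T}>\rho\lambda$ in the constant-background case (and $a^{(T)}_\tau>\rho\,b^{(T)}_\tau$ in general). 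Moreover, by the pruning proposition of Section~\ref{section:functional_pruning}, along the stored list $\tau_1<\dots<\tau_n$ the ratios $a^{(T)}_{\tau_i}/b^{(T)}_{\tau_i}$ are strictly increasing in $i$, so the $\mu_{\text{min}}$ check need only be applied at the oldest end; consequently the set of retained start points is contained in $\mathcal{A}_T:=\{\tau\le T:\ \bar x_{\tau:T}>\rho\lambda\}$, and it suffices to bound $\mathbb{E}|\mathcal{A}_T|$.

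Next I would apply a union bound and a Chernoff inequality. Under the background model the $X_t$ are i.i.d.\ $\mathrm{Poisson}(\lambda)$, so, writing $m:=T-\tau+1$,
\[
\mathbb{E}|\mathcal{A}_T|=\sum_{m=1}^{T}\mathbb{P}\!\left(\sum_{t=T-m+1}^{T}X_t>\rho\lambda m\right)\le\sum_{m=1}^{T}e^{-m\,I(\rho\lambda)},
\]
where $I(\rho\lambda)=\sup_{\theta>0}\{\theta\rho\lambda-\lambda(e^{\theta}-1)\}=\lambda\{\rho\log\rho-(\rho-1)\}$ is the Poisson rate function at $\rho\lambda>\lambda$. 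Positivity, $I(\rho\lambda)>0$, holds precisely because $\rho>1$ (the function $\rho\mapsto\rho\log\rho-(\rho-1)$ vanishes at $1$ and is strictly increasing thereafter). Summing the geometric series gives $\mathbb{E}|\mathcal{A}_T|\le (e^{I(\rho\lambda)}-1)^{-1}$, a finite constant independent of $T$, which is the claim.

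Finally, the same argument handles the variants. For a time-varying background with $\lambda(t)\in[\lambda_-,\lambda_+]\subset(0,\infty)$ the retention condition becomes $\sum_{t=\tau}^{T}X_t>\rho\sum_{t=\tau}^{T}\lambda(t)$, and a Chernoff bound on the sum of independent (non-identical) Poisson variables still decays geometrically in $m$, with rate at least $\lambda_-\{\rho\log\rho-(\rho-1)\}$ after a short convexity argument; the exponential time-to-arrival formulation of Section~\ref{section:other_data} is analogous. The step I expect to need the most care is the structural reduction in the first paragraph: confirming that the $\mu_{\text{min}}$ pruning as actually implemented leaves essentially no retained curve with $a^{(T)}_\tau/b^{(T)}_\tau\le\rho$ (or at worst an $O(1)$-in-expectation surplus due to pruning only one curve per timestep), which rests on the monotonicity of the ratios along the list. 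The Chernoff bound and the geometric summation are routine.
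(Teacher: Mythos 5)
Your proof is correct, but it takes a genuinely different route from the paper's. The paper couples the curve count to the renewal structure of the constrained Page statistic: it considers the random walk $S_T=\sum_t\bigl(X_t\log\mu_{\text{min}}-\lambda(\mu_{\text{min}}-1)\bigr)$, notes it has negative drift since $\log\mu_{\text{min}}<\mu_{\text{min}}-1$, and argues that the time between resets of the Page statistic to zero has finite expectation (via a CLT approximation of $\mathbb{P}(S_T<0)$); since every stored curve satisfies $C^{(T)}_\tau(\mu_{\text{min}})>0$, all curves are flushed at each reset, so the number of curves is at most the age since the last reset. Your argument instead bounds the retained set directly by $\mathcal{A}_T=\{\tau:\bar x_{\tau:T}>\rho\lambda\}$ and controls $\mathbb{E}|\mathcal{A}_T|$ by a union bound plus a Poisson Chernoff bound, summing a geometric series to get the explicit constant $(e^{I(\rho\lambda)}-1)^{-1}$. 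Your version is arguably tighter as a piece of mathematics: it is non-asymptotic, avoids the paper's ``$\approx$'' steps, and sidesteps the inspection-paradox issue the paper glosses over (finiteness of the expected time \emph{between} resets does not immediately bound the expected \emph{age} at a fixed $T$ without a second-moment condition), while producing an explicit bound. What the paper's route buys is a direct reflection of the algorithmic mechanism (all curves die together when the $\mu_{\text{min}}$-Page statistic hits zero), which also explains \emph{why} the list empties periodically. The one step you rightly flag --- that the implemented pruning leaves no retained curve with $a^{(T)}_\tau/b^{(T)}_\tau\le\rho$ --- is justified by the monotonicity of the ratios along the surviving list (a consequence of the domination proposition), exactly as you say, so your structural reduction is sound for the idealised algorithm the proposition describes.
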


\begin{figure}
     \centering
     \begin{subfigure}[b]{0.49\textwidth}
         \centering
         \includegraphics[width=\textwidth]{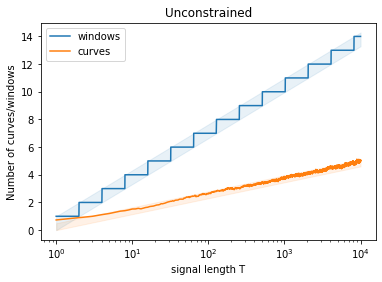}
         \caption{unconstrained case}
         \label{fig:unconstrained}
     \end{subfigure}
     \hfill
     \begin{subfigure}[b]{0.49\textwidth}
         \centering
         \includegraphics[width=\textwidth]{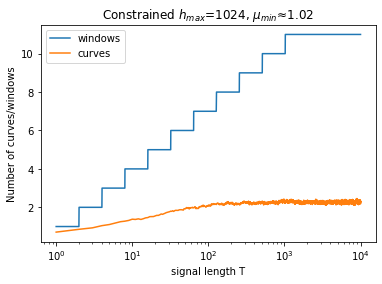}
         \caption{constrained case}
         \label{fig:constrained}
     \end{subfigure}
     
     \caption{Comparisons of the number of windows and expected number of curves (average over 1000 runs) kept by FOCuS running over a signal with base rate $\lambda=100$ using a $5$-sigma threshold. Figure \ref{fig:unconstrained} also highlights the logarithmic boundaries on the number of windows and curves in the unconstrained case.}
     \label{fig:cost_comparison}
\end{figure}

For geometrically spaced windows, over an infinite horizon the number of windows used at each timestep $T$ is $\in [\log_2(T), \log_2(T)+1]$, and if a $h_{\text{max}}$ is implemented then this will be bounded after a certain point. Figure \ref{fig:cost_comparison} gives a comparison of the number of windows and expected number of curves, showing that although the bound from Proposition \ref{prop:bounded_curves} is difficult to calculate, it is substantially below the corresponding bound on the number of windows. Therefore, Poisson-FOCuS provides the statistical advantages of an exhaustive window search at under half the computational cost of a geometrically spaced one.

\section{Empirical evaluation}
\label{section:empirical_evaluation}

We now empirically evaluate Poisson-FOCuS, and compare with a window method. We do this in two ways. First we simulate GRBs of different length and measure how many photon arrivals are needed within the GRB for them to be detected by each algorithm. Secondly, we run both methods on one week of data from the Fermi-GBM archive. 

Although not directly relevant to the improvements proposed by the Poisson-FOCuS algorithm, the issues with estimating the background rate and its impact on detecting GRBs are discussed in more detail in Appendix \ref{section:background_bias}.

\subsection{Statistical comparison with window method}

We first compare the FOCuS with a the window based method on synthetic data that has been simulated to mimic known GRBs, but allowing for different intensities of burst.
To simulate the data for a chosen known GRB at a range of different brightnesses, the photon stream of the GRB was converted into a random variable via density estimation. One draw from this random variable would give a photon impact time, and $n$ independent draws sorted into time order would give a stream of photon impact times that well approximate the shape of the burst. These were then overlaid on a background photon stream to form a signal that was binned into fundamental time widths of $50$ms, which was fed into either Poisson-FOCuS (equivalent to an exhaustive window search) or a geometrically spaced window search. The maximum sigma-level that was recorded when passing over the signal with each method is then plotted for various different brightnesses $n$. To stabilise any randomness introduced by the use of a random variable for GRB shape, this was repeated 10 times with different random seeds common to both methods and the average sigma-level is plotted.

The extent to which Poisson-FOCuS provides an improvement in detection power depends on the size and shape of the burst, and in particular whether the most promising interval in the burst lines up well with the geometrically spaced window grid. For example, the burst illustrated in Figure \ref{fig:grb1_pic} does not line up with this grid, and Figure \ref{fig:grb1_sim} shows how Poisson-FOCuS provides an improvement in detection power for this shape of burst at various different brightnesses. However, the shorter burst in Figure \ref{fig:grb2_pic} clearly has a most promising interval of size 1 for this binning choice, which is covered exactly by the geometric window grid. Therefore, Poisson-FOCuS provides no improvement over the window grid, as can be seen in Figure \ref{fig:grb2_sim}.

Because it is impossible to predict beforehand whether the most promising region of the burst will line up with any given choice of window grid, Poisson-FOCuS therefore provides a statistical advantage over any non-exhaustive choice of window grid in the real data setting.

\begin{figure}
     \centering
     \begin{subfigure}[b]{0.49\textwidth}
         \centering
         \includegraphics[width=\textwidth]{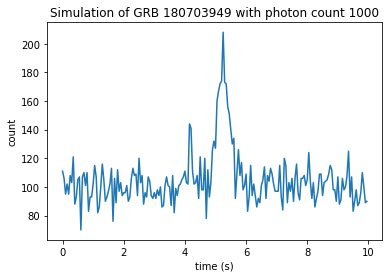}
         \caption{}
         \label{}
     \end{subfigure}
     \hfill
     \begin{subfigure}[b]{0.49\textwidth}
         \centering
         \includegraphics[width=\textwidth]{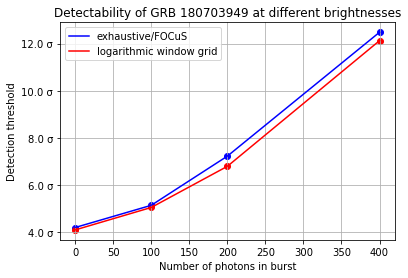}
         \caption{}
         \label{}
     \end{subfigure}
     \hfill
     
     \begin{subfigure}[b]{0.49\textwidth}
         \centering
         \includegraphics[width=\textwidth]{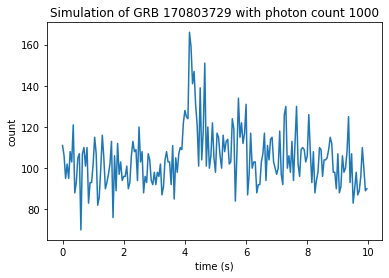}
         \caption{}
         \label{fig:grb1_pic}
     \end{subfigure}
     \hfill
     \begin{subfigure}[b]{0.49\textwidth}
         \centering
         \includegraphics[width=\textwidth]{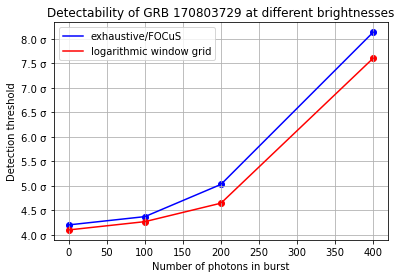}
         \caption{}
         \label{fig:grb1_sim}
     \end{subfigure}
     \hfill
     
     \begin{subfigure}[b]{0.49\textwidth}
         \centering
         \includegraphics[width=\textwidth]{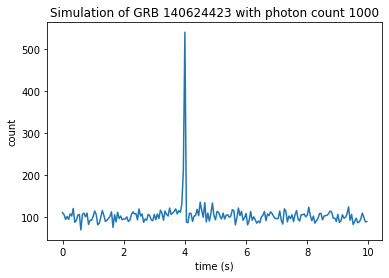}
         \caption{}
         \label{fig:grb2_pic}
     \end{subfigure}
     \hfill
     \begin{subfigure}[b]{0.49\textwidth}
         \centering
         \includegraphics[width=\textwidth]{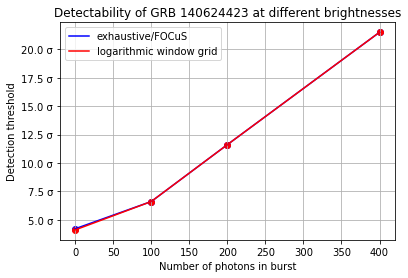}
         \caption{}
         \label{fig:grb2_sim}
     \end{subfigure}
     
     \caption{Plots of runs of FOCuS over simulated GRB copies of different brightnesses}
     \label{fig:simulated_grbs}
\end{figure}

\subsection{Application to FERMI data}
\label{section:FERMI_application}

In the context of HERMES, Poisson-FOCuS is currently being employed for two different purposes. First, a trigger algorithm built using Poisson-FOCuS is being developed for on-board, online GRB detection. To date, a dummy implementation has been developed and preliminary testings performed on the HERMES payload data handling unit computer. Second, Poisson-FOCuS is being employed in a software framework intended to serve as the foundation for the HERMES offline data analysis pipeline \cite[]{2022crupi}. In this framework, background reference estimates are provided by a neural network as a function of the satellite's current location and orientation.

Since no HERMES cube satellites have been launched yet, testing has taken place over Fermi gamma-ray burst monitor (GBM) archival data, looking for events which may have evaded the on-board trigger algorithm. The data used for the analysis were drawn from the Fermi GBM daily data, Fermi GBM trigger catalogue, and Ferm GBM untriggered burst candidates catalogue, all of which are publically available at NASA's High Energy Astrophysics Science Archive Research Center \cite[]{fermigdays, fermigtrig, fermiuntriggered}.

The algorithm was run over eight days of data, from 00:00:00 2017/10/02 to  23:59:59 2017/10/09 UTC time. This particular time frame was selected because, according to the untriggered GBM Short GRB candidates catalog, it hosts two highly reliable short GRB candidates which defied the Fermi-GBM online trigger algorithm. During this week the Fermi GBM algorithm was triggered by 11 different events. Six of these were classified as GRBs, three as terrestrial gamma-ray flashes, one as a local particle event and one as an uncertain event. The algorithm was run over data streams from 12 sodium iodide GBM detectors in the energy range of $50-300$ kiloelectron volts, which is most relevant to GRB detection but excludes the bismuth germanate detectors and higher energy ranges designed to find terrestrial gamma-ray flashes.

The data was binned at $100$ms. Background count-rates were assessed by exponential smoothing of past observations, excluding the most recent $4$s, and any curves corresponding to start points older than $4$s were automatically removed from the curve lists. The returning condition used was the same used by Fermi-GBM: a trigger is issued whenever at least two detectors are simultaneously above threshold. After a trigger, the algorithm was kept idle for five minutes and then restarted.

At a $5$-sigma threshold, Poisson-FOCuS was able to identify all the six GRBs which also triggered the Fermi-GBM algorithm, one of which is shown in Figures \ref{fig:burst_a} and \ref{fig:signif_b}. We also observed a trigger compatible with an event in the untriggered GBM Short GRB candidates catalog \cite[]{fermiuntriggered}, which is shown in Figures \ref{fig:burst_c} and \ref{fig:signif_d}. An uncertain event not in either catalogue is shown in Figures \ref{fig:burst_e} and \ref{fig:signif_f}, which may indicate a GRB that had been missed by earlier searches.

\afterpage{
\begin{figure}
\thispagestyle{empty}
     \centering
     \begin{subfigure}[b]{0.49\textwidth}
         \centering
         \includegraphics[width=\textwidth]{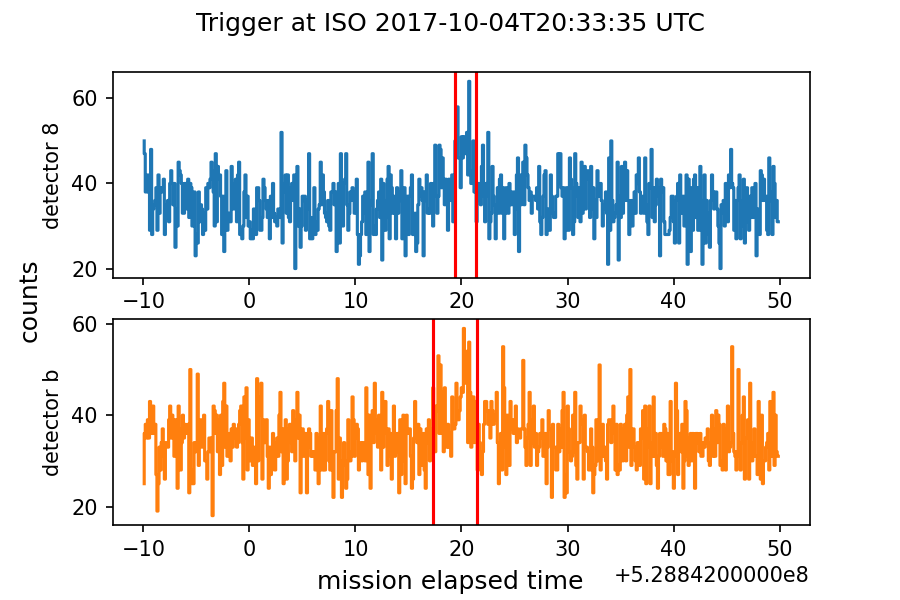}
         \caption{known GRB from Fermi catalogue}
         \label{fig:burst_a}
     \end{subfigure}
     \hfill
     \begin{subfigure}[b]{0.49\textwidth}
         \centering
         \includegraphics[width=\textwidth]{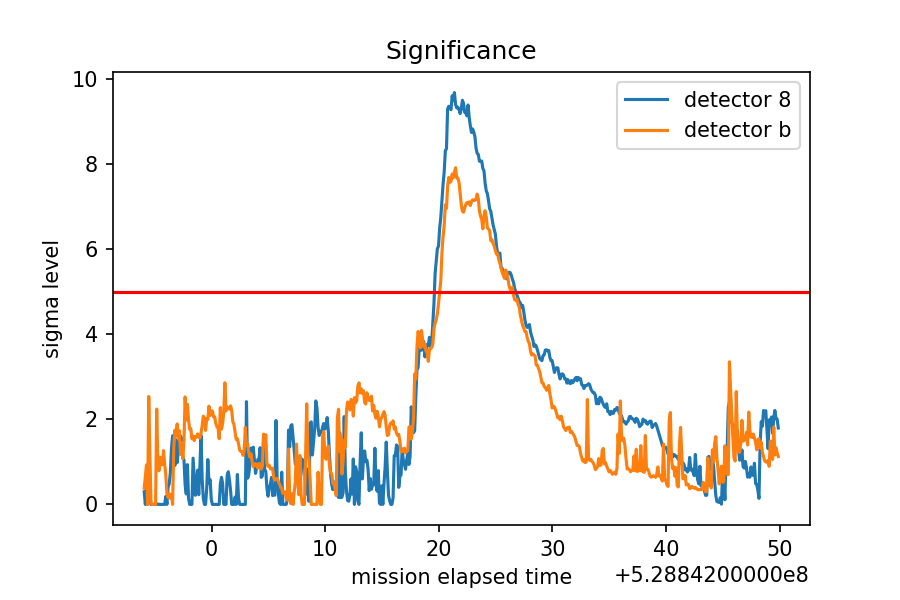}
         \caption{significance from Poisson-FOCuS}
         \label{fig:signif_b}
     \end{subfigure}      \hfill
     \begin{subfigure}[b]{0.49\textwidth}
         \centering
         \includegraphics[width=\textwidth]{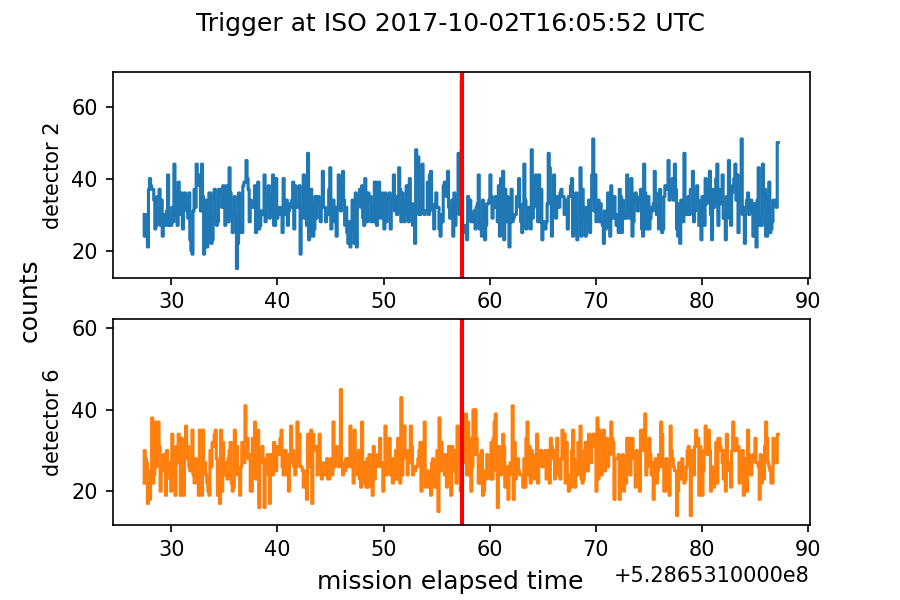}
         \caption{GRB in short candidates catalogue}
         \label{fig:burst_c}
     \end{subfigure}     \hfill
     \begin{subfigure}[b]{0.49\textwidth}
         \centering
         \includegraphics[width=\textwidth]{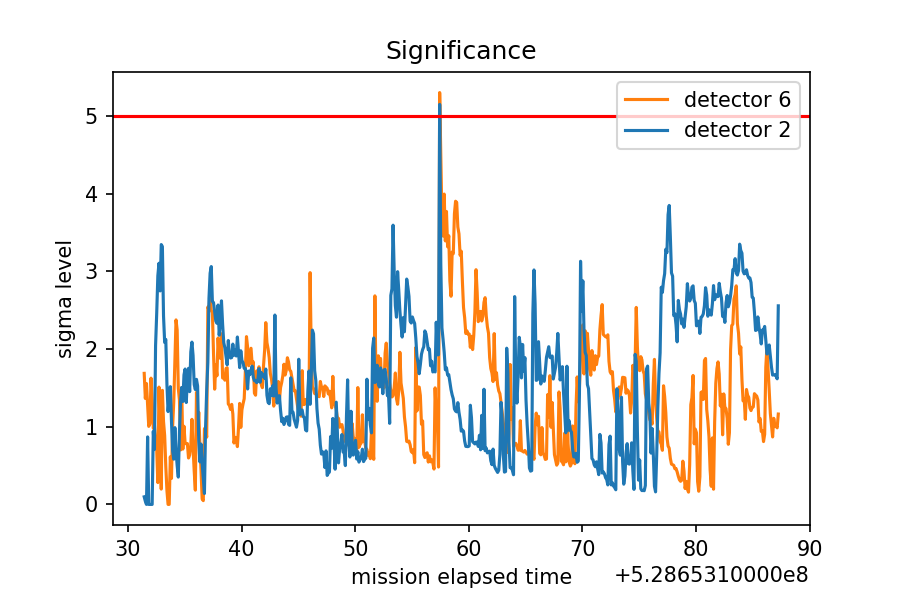}
         \caption{significance from Poisson-FOCuS}
         \label{fig:signif_d}
     \end{subfigure}
       \hfill
     \begin{subfigure}[b]{0.49\textwidth}
         \centering
         \includegraphics[width=\textwidth]{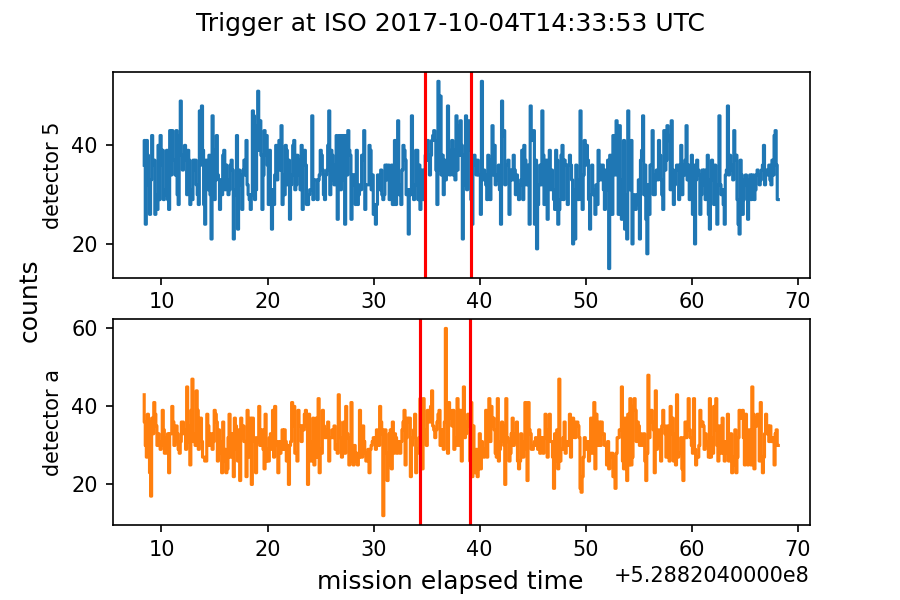}
         \caption{possible GRB not in either catalogue}
         \label{fig:burst_e}
     \end{subfigure}     \hfill
     \begin{subfigure}[b]{0.49\textwidth}
         \centering
         \includegraphics[width=\textwidth]{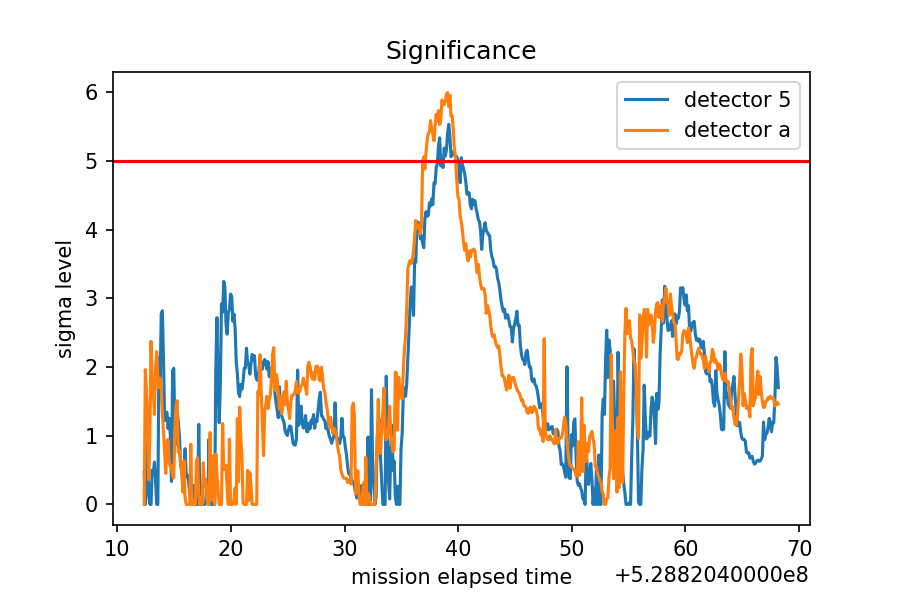}
         \caption{significance from Poisson-FOCuS}
         \label{fig:signif_f}
     \end{subfigure}  
     
     \caption{Three of the triggers found in the FERMI daily data. Left-hand column shows data from the two detectors that give a trigger, and the right-hand column shows the corresponding output from the Poisson-FOCuS algorithm.}
     \label{fig:fermi_application}
\end{figure}
\clearpage
}

\section{Discussion}
\label{section:discussion}


The main purpose of this work was to create a GRB detection algorithm that is mathematically equivalent to searching all possible window lengths, while requiring less computational power than the grid of windows approach. This was suitable for use on the HERMES satellites, where it has lead to a reduction of required computations in a very computationally constrained setting, as well as a reduction of number of parameter choices by practitioners as exact values for window lengths in a grid no longer need to be specified or justified.

There is increasing interest in detecting anomalies in other low-compute settings, for example Internet of Things sensors which must continuously monitor a signal \cite[]{Dey2018-dy}. These may have a limited battery life or limited electricity generation from sensor-mounted solar panels \cite[]{Nallusamy2011-iw}. Therefore, the algorithm we have developed may be of use more widely. 

Much of the mathematical work presented in this paper is also applicable to the $\mu \in [0, 1]$ case that searches for an anomalous lack of count in a signal. When adapting Poisson-FOCuS to this setting, it is important to make sure the algorithm functions well in situations where the counts are small, as these are precisely the locations of anomalies. This would likely entail using the adaption to work directly on count data given in Section \ref{section:other_data}, while ensuring that an anomalous lack of count could be declared in between individual counts. Combining these two cases would give a general algorithm for detection of anomalies on $\mu \in [0, \infty)$.

Code for Poisson-FOCuS and the analysis for this paper is available at the GitHub repository
\ifnotblind \url{https://github.com/kesward/FOCuS} \else \url{https://github.com/***} \fi.

\ifnotblind
\noindent
{\bf Acknowledgements}

This work was supported by the EPSRC grants EP/N031938/1 and EP/R004935/1, and BT as part of the Next Generation Converged Digital Infrastructure (NG-CDI) Prosperity Partnership.

\fi

\bibliographystyle{agsm}
\bibliography{bibliography, biblio} 

\newpage
\ifJASA
\setcounter{page}{1}
{\bf Supplementary Material for ``Poisson-FOCuS: An efficient online method for detecting count bursts with application to gamma ray burst detection"}
\fi

\appendix 

\section{Derivations of the LR statistic}

\subsection{Window method}

\lrderivation*

\begin{proof}
On an interval $x_{t+1:t+h}$, we have expected count $h\lambda$ and actual count $ h\bar{x}_{t+1:t+h}$. We utilise the Poisson likelihood

\[ L(\lambda; x_{t+1:t+h}) =  \frac{e^{-h\lambda} (h\lambda)^{h\bar{x}_{t+1:t+h}}}{(h\bar{x}_{t+1:t+h})!},\]

and log-likelihood

\[ \ell(\lambda; x_{t+1:t+h}) = -h\lambda + h\bar{x}_{t+1:t+h} \log (h\lambda) + c. \]

Our likelihood ratio statistic then becomes

\[
LR= -2 \left\{ \ell(\lambda; x_{t+1:t+h}) - \ell(\bar{x}_{t+1:t+h}; x_{t+1:t+h}) \right\} \]

\[ =  - 2 \left\{ -h\lambda + h\bar{x}_{t+1:t+h} \log (h\lambda) - (-h\bar{x}_{t+1:t+h} + h\bar{x}_{t+1:t+h} \log (h\bar{x}_{t+1:t+h}) )\right\} \]

\[ =  2h\lambda \left\{\frac{\bar{x}_{t+1:t+h}}{\lambda} \log \left(\frac{\bar{x}_{t+1:t+h}}{\lambda}\right) -  \left(\frac{\bar{x}_{t+1:t+h}}{\lambda}-1\right) \right\}. \]

\end{proof}

\subsection{Page-CUSUM method}

\lrderivationpage*

\begin{proof}
Our Poisson likelihood and log-likelihood is as follows:

$$L(\lambda;x_{1:T} ) = \frac{e^{-T\lambda}(T\lambda)^{\sum_{t=1}^T x_t}}{(\sum_{t=1}^T x_t)!},$$

$$l(\lambda;x_{1:T}) = -T\lambda + \sum_{t=1}^T x_t\log(\lambda) + c.$$

Under the null hypothesis of no anomaly, and the alternative of one anomaly at $\tau$, we have as our log-likelihoods the following:

$$l(\mathbf{H}_0;x_{1:T} ) = \sum_{t=1}^{T}[x_t \log(\lambda) - \lambda] + c$$

$$l(\mathbf{H}_1;x_{1:T}) = \max_{1 \leq \tau \leq T} \left(\sum_{t=1}^{\tau-1}[x_t \log(\lambda) - \lambda]+ \sum_{t=\tau}^{T}[x_t \log(\mu \lambda) - \mu \lambda]\right) + c$$

Here, the maximum is because we have no idea where our start point $\tau$ actually is, so we look at them all and pick the one with largest likelihood. This gives our log-likelihood ratio statistic as

$$LR = 2\max_{1 \leq \tau \leq T} \sum_{t=\tau}^T(x_t \log (\mu) - \lambda (\mu-1)) $$

$$= \max_{1 \leq \tau \leq T} \left[2(T-\tau+1)\lambda \left\{\frac{\bar{x}_{\tau:T}}{\lambda} \log \left(\mu \right) -  \left(\mu-1\right) \right\} \right].$$
\end{proof}

\subsection{Exponential}

\exponentialdata*

\begin{proof}
Making the assumption that we can consider the background rate constant between successive photon arrivals, our hypotheses for an individual logarithm curve $C_{\tau}^{(T)}$ are as follows:

\begin{itemize}
    \item $\mathbf{H}_0$: $U_{\tau},\ldots, U_{T}$ has $U_t \sim \text{Exp}(\lambda_t)$.
    \item $\mathbf{H}_1$: $ U_{\tau},\ldots,U_{T}$ has $U_t \sim \text{Exp}(\mu \lambda_t)$, for some $\mu>1$.
\end{itemize}

The exponential likelihood and log-likelihood are as follows:

\[ L(\lambda_{\tau:T}; u_{\tau:T}) =  \prod_{t=\tau}^T (\lambda_t) e^{-\sum_{t=\tau}^T \lambda_t u_t}, \]

\[ l(\lambda_{\tau:T}; u_{\tau:T}) =  \sum_{t=\tau}^T \log(\lambda_t) - \sum_{t=\tau}^T \lambda_t u_t.\]

This gives our log-likelihood ratio for the curve as

\[ C_{\tau}^{(T)} := l(\mu \lambda_{\tau:T}; u_{\tau:T}) - l(\lambda_{\tau:T}; u_{\tau:T}) =  \sum_{t=\tau}^T [\log(\mu) - \lambda_t u_t(\mu-1)] \]

This gives the update coefficients for curves as stated.

\end{proof}

\section{Detectability regions}

Here we provide the derivations of the detectability regions. For ease of reference, we reproduce these regions in Figure \ref{fig:page_window_comparison2}. 

Assume we are running a window of length $h$ over a signal containing a burst $x_{t+1:t+h^*}$ of length $h^*$. Our background rate $\lambda$ is assumed fixed. We want to figure out what is the smallest intensity $\mu^*$ we are able to detect, assuming that $\mu$ is the faintest intensity at which a burst of length $h$ is detectable.

Bursts of duration $h^*>h$ will only be detected at the $k$-sigma significance level if some subinterval of size $h$ is detected at the $k$-sigma significance level. No additional benefit can be provided by the presence of the part of the burst currently outside the window, so $\mu^*=\mu$. Therefore the green line on Figure \ref{fig:page_window_comparison2} has been drawn as a straight vertical.

Bursts of a duration $h^*<h$ can be found if they have a higher $\mu^*$. Splitting the window $h$ into anomalous and non-anomalous parts, we have that

$$ \mu \lambda h = \mu^* \lambda h^* + \lambda (h - h^*).$$

This rearranges to

$$ (\mu-1) h = (\mu^*-1) h^*,$$

which gives the other green line shown in Figure \ref{fig:page_window_comparison2}.

\begin{figure}
     \centering
     \begin{subfigure}[b]{0.6\textwidth}
         \centering
         \includegraphics[width=\textwidth]{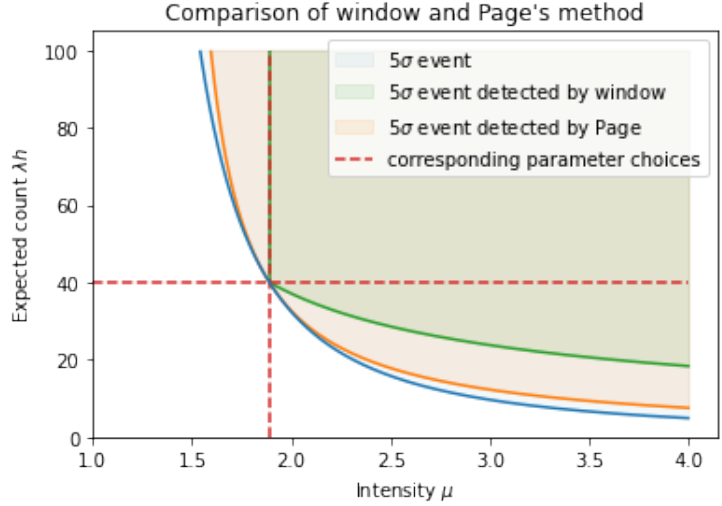}
     \end{subfigure}
     \caption{Detectability of Page-CUSUM and Window methods. \label{fig:page_window_comparison2}}
\end{figure}

Assume we are using Page's method with parameter $\mu$ over a signal containing a burst of intensity  $\mu^*$. Our background rate $\lambda$ is assumed fixed. We want to figure out what is the shortest duration $h^*$ required to detect the burst at a $k$-sigma threshold. Using our likelihood ratio, we have that:

\[ h^*\lambda \left[ \mu^* \log \left(\mu\right) -  \left(\mu-1\right) \right] = \frac{k^2}{2}, \]

\[  \mu^* = \frac{1}{\log(\mu)}\left[ \frac{k^2}{2h^* \lambda} + (\mu-1) \right] .\]

This gives the orange line in Figure \ref{fig:page_window_comparison}.

\section{Equivalences between Page-CUSUM and window methods}

\pageonlywindow*

\begin{proof}
Consider the last time $\tau$ where Page's statistic last became non-zero. On $[\tau, T]$ the likelihood ratio with our choice of $\mu>1$ exceeds a $k$-sigma threshold, therefore the maximised likelihood ratio over an unconstrained $\mu>1$ (which occurs at $\mu = \frac{\bar{x}_{\tau:T}}{\lambda}$) also exceeds a $k$-sigma threshold.
\end{proof}

\pagebeatswindow*

\begin{proof}

We choose the value of $\mu$ solving the equation

\[ 2h\lambda \left[ \mu \log \left(\mu\right) - \left(\mu-1\right) \right] = k^2, \]

i.e. the ideal intensity choice for the expected count $h\lambda$ used in this likelihood ratio test. Since $x_{t+1:t+h}$ is significant at the $k$-sigma level, we have that

\[2h\lambda \left[\frac{\bar{x}_{t+1:t+h}}{\lambda} \log \left(\frac{\bar{x}_{t+1:t+h}}{\lambda}\right) -  \left(\frac{\bar{x}_{t+1:t+h}}{\lambda}-1\right) \right] \geq k^2. \]

As the function $f(x) = x \log x - (x-1)$ is an increasing function, this shows that $\bar{x}_{t+1:t+h}/\lambda \geq \mu$.

We then have that

\begin{align*}
    S_{t+h}(\mu) &= \left[\max_{1 \leq \tau \leq t+h} \sum_{s=\tau}^{t+h}(x_s \log (\mu) - \lambda (\mu-1))\right]^+ \\
            &\geq \sum_{s=t+1}^{t+h} (x_s \log (\mu) - \lambda (\mu-1)) \\
            &= h\lambda \left[ \frac{\bar{x}_{t+1:t+h}}{\lambda} \log (\mu) -  (\mu-1) \right] \\
            &\geq  h\lambda \left[ \mu \log \left(\mu\right) -  \left(\mu-1\right) \right] \\
            &= \frac{k^2}{2}.
\end{align*} 

Therefore $S_{t+h}(\mu)$ is significant at a $k$-sigma significance level.

\end{proof}

\section{Conditions for pruning} \label{sec:conditionsforpruning}

\begin{proposition}
Let $C^{(T)}_{\tau_i}$ and $C^{(T)}_{\tau_j}$ be curves that are positive somewhere on $\mu \in [1, \infty)$, where $\tau_i < \tau_j$ and $C_{\tau_i}^{(\tau_j-1)}$ is also positive somewhere on $\mu \in [1, \infty)$.

Then $C^{(T)}_{\tau_i}$ dominates $C^{(T)}_{\tau_j}$ if and only if $a_{\tau_j}^{(T)} / b_{\tau_j}^{(T)} \leq a_{\tau_i}^{(\tau_j-1)} / b_{\tau_i}^{(\tau_j-1)}$ or equivalently $a^{(T)}_{\tau_j} / b^{(T)}_{\tau_j} \leq a^{(T)}_{\tau_i} / b^{(T)}_{\tau_i}$. Additionally, it cannot be the case that $C_{\tau_j}^{(T)}$ dominates $C_{\tau_i}^{(T)}$.
\end{proposition}
\begin{proof}
Let $\mu_{ij}$ be the non-unit intersection point of $C^{(T)}_{\tau_i}$ and $C^{(T)}_{\tau_j}$, i.e. the root of $C^{(\tau_j-1)}_{\tau_i}$. Then by rearrangement we have that

$$ a_{\tau_i}^{(\tau_j-1)}\log(\mu_{ij}) - b_{\tau_i}^{(\tau_j-1)}(\mu_{ij}-1) = 0,$$

$$ \frac{a_{\tau_i}^{(\tau_j-1)}}{b_{\tau_i}^{(\tau_j-1)}} = \frac{\mu_{ij}-1}{\log(\mu_{ij})}.$$

Because $C_{\tau_i}^{(\tau_j-1)}$ is non-negative on $\mu \in [1, \mu_{ij})$, we cannot have $C_{\tau_j}^{(T)}$ dominating $C_{\tau_i}^{(T)}$. For $C_{\tau_i}^{(T)}$ to dominate $C_{\tau_j}^{(T)}$, we must have that $C_{\tau_j}^{(T)} \leq 0$ on $\mu \in [\mu_{ij}, \infty)$, i.e. $C_{\tau_j}^{(T)}(\mu_{ij}) \leq 0$. Rearranging, we have

$$ a_{\tau_j}^{(T)}\log(\mu_{ij}) - b_{\tau_j}^{(T)}(\mu_{ij}-1) \leq 0,$$

$$ \frac{a_{\tau_j}^{(T)}}{b_{\tau_j}^{(T)}} \leq \frac{\mu_{ij}-1}{\log(\mu_{ij})}.$$

Putting these together gives us the condition $a_{\tau_j}^{(T)} / b_{\tau_j}^{(T)} \leq a_{\tau_i}^{(\tau_j-1)} / b_{\tau_i}^{(\tau_j-1)}$. For the other form, note that we can rearrange the inequality:

\[ a^{(T)}_{\tau_j} b^{(\tau_j-1)}_{\tau_i}  \leq a^{(\tau_j-1)}_{\tau_i} b^{(T)}_{\tau_j},  \]

\[ a^{(T)}_{\tau_j} b^{(\tau_j-1)}_{\tau_i} +  a^{(T)}_{\tau_j} b^{(T)}_{\tau_j} \leq a^{(\tau_j-1)}_{\tau_i} b^{(T)}_{\tau_j} +  a^{(T)}_{\tau_j} b^{(T)}_{\tau_j},  \]

\[ a^{(T)}_{\tau_j} b^{(T)}_{\tau_i}  \leq a^{(T)}_{\tau_i} b^{(T)}_{\tau_j}.\]

\end{proof}

\section{Bounds on numbers of curves}

\logcurves*
\begin{proof}

Recalling that a logarithm curve $C_{\tau}^{(T)}(\mu)$ is defined as

\[ C_{\tau}^{(T)}(\mu) := \sum_{t=\tau}^T[X_t \log(\mu)-\lambda(\mu-1)],\]

we define the set of candidate start points $\mathfrak{I}_{T}$ at time $T$ to be the set of all $\tau$ directly contributing to $S_T(\mu)$, i.e.

\[ \mathfrak{I}_{T} := \{ \tau: \exists \mu, \forall \tau^{'} \neq \tau,  [C_{\tau}^{(T)}(\mu)]^+ > [C_{\tau^{'}}^{(T)}(\mu)]^+ \} .\]

The number of curves kept by Poisson-FOCuS at time $T$ is, barring computational implementations that occasionally keep extra curves to avoid repeated pruning checks, exactly $|\mathfrak{I}_{T}|$.

\begin{lemma}
Suppose $\tau^{'} \in \mathfrak{I}_{T}$. This is equivalent to the following two conditions:

\begin{itemize}
    \item for any $\tau^{'} < \tau^{''} \leq T$, we have that 
            $$\lambda < \bar{X}_{\tau^{'}:\tau^{''}} .$$
    \item for any $1 \leq \tau < \tau^{'} < \tau^{''} \leq T$, we have that
            \[ \bar{X}_{\tau, \tau^{'}-1} < \bar{X}_{\tau^{'}, \tau^{''}}. \]
\end{itemize}

\end{lemma}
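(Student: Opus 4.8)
The plan is to characterise membership in $\mathfrak{I}_T$ by working directly from the dominance/pruning condition established in the pruning proposition, rather than reasoning geometrically about the curves each time. Recall that a start point $\tau'$ survives (i.e.\ lies in $\mathfrak{I}_T$) precisely when it is not dominated by any curve with an earlier start point, and it in turn does not dominate any curve with a later start point that would otherwise be in the list. By the pruning proposition, $C^{(T)}_{\tau_i}$ dominates $C^{(T)}_{\tau_j}$ (for $\tau_i<\tau_j$, with the relevant positivity hypotheses) if and only if $a^{(T)}_{\tau_j}/b^{(T)}_{\tau_j}\le a^{(T)}_{\tau_i}/b^{(T)}_{\tau_i}$. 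Translating the coefficient ratios back into averages, $a^{(T)}_{\tau}/b^{(T)}_{\tau}=\bar X_{\tau:T}/\lambda$, and more generally $a^{(\tau'')}_{\tau}/b^{(\tau'')}_{\tau}=\bar X_{\tau:\tau''}/\lambda$, so every inequality between coefficient ratios is an inequality between sample averages divided by the common constant $\lambda$.

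The first bullet is the statement that $\tau'$ is not pruned by the zero line: a curve $C^{(\tau'')}_{\tau'}$ is positive somewhere on $(1,\infty)$ iff its observed count exceeds its expected count, i.e.\ iff $\bar X_{\tau':\tau''}>\lambda$. The requirement ``for all $\tau'<\tau''\le T$'' reflects that if at any intermediate time $\tau''$ the running average over $[\tau',\tau'']$ had dropped to $\lambda$ or below, the curve starting at $\tau'$ would have been removed (or never added) at that time and, since coefficient differences are time-invariant, could never re-enter. Conversely, if $\bar X_{\tau':\tau''}>\lambda$ for all such $\tau''$, the curve is never dominated by the zero line. I would prove this direction by noting that the add/prune logic of Algorithm~\ref{algorithm:focus_poisson} only ever removes the most recent curve when its ratio falls below that of the preceding curve or below $1$, and that the quantities $a^{(\tau''-1)}_{\tau'}/b^{(\tau''-1)}_{\tau'}$ governing any future pruning of $\tau'$ are frozen at the moment $\tau''$, so a single pass through the times $\tau'+1,\dots,T$ suffices.

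The second bullet encodes non-domination by earlier start points. For $\tau<\tau'$, the pruning proposition says the earlier curve $C^{(\tau'')}_{\tau}$ would dominate $C^{(\tau'')}_{\tau'}$ at time $\tau''$ exactly when $a^{(\tau'')}_{\tau'}/b^{(\tau'')}_{\tau'}\le a^{(\tau'-1)}_{\tau}/b^{(\tau'-1)}_{\tau}$, which rearranges (using $b^{(\tau'')}_{\tau'}=b^{(\tau'-1+\tau''-\tau'+1)}$\ldots, i.e.\ the additivity of $a$ and $b$ in the interval) to $\bar X_{\tau:\tau'-1}\ge \bar X_{\tau':\tau''}$. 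Hence $\tau'$ avoids being pruned by $\tau$ at time $\tau''$ iff $\bar X_{\tau:\tau'-1}<\bar X_{\tau':\tau''}$, and surviving against \emph{all} earlier $\tau$ at \emph{all} intermediate $\tau''$ with $\tau'<\tau''\le T$ is exactly the second condition. I would present this as: $\tau'\in\mathfrak{I}_T$ iff $\tau'$ is never pruned on the interval $[\tau'+1,T]$, and use the pruning proposition (in the ``equivalently $a^{(T)}_{\tau_j}/b^{(T)}_{\tau_j}\le a^{(T)}_{\tau_i}/b^{(T)}_{\tau_i}$'' form) once for the zero-line case and once for the earlier-curve case, each time rewriting coefficient ratios as averages.

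The main obstacle is getting the quantifier over intermediate times $\tau''$ exactly right: a curve can be pruned at \emph{any} time between its creation and $T$, not just at $T$, so one must argue that survival is equivalent to never having been prunable, which requires the observation that the comparison ratio $a^{(\tau'-1)}_{\tau}/b^{(\tau'-1)}_{\tau}$ (for domination by $\tau$) and $1$ (for the zero line) do not change with time, so ``was prunable at some $\tau''$'' cannot be undone later. A secondary subtlety is the positivity hypotheses in the pruning proposition — one needs to check that whenever the first bullet holds for $\tau'$ (so $C^{(\tau'')}_{\tau'}$ is genuinely positive somewhere) the proposition applies, and to handle the edge cases where an earlier curve $C^{(\tau'-1)}_{\tau}$ is not itself positive (in which case it cannot dominate $\tau'$ at all, consistent with the strict inequality $\bar X_{\tau:\tau'-1}<\lambda<\bar X_{\tau':\tau''}$ being automatic from the first bullet). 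Once these quantifier and positivity bookkeeping points are pinned down, the equivalence follows by direct substitution.
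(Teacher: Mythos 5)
Your overall route --- deriving the lemma as a corollary of the pruning proposition by rewriting the coefficient ratios $a/b$ as sample means over $\lambda$ --- is a legitimate and more modular alternative to the paper's argument, which instead expands $C^{(\tau'')}_{\tau}$ as a weighted combination of the averages over $[\tau,\tau'-1]$ and $[\tau',\tau'']$ and compares curves pointwise. However, there is a genuine gap at the step you yourself flag as ``the main obstacle''. You assert that prunability at an intermediate time cannot be undone because the comparison ratio does not change with time. The frozen quantity is $a^{(\tau'-1)}_{\tau}/b^{(\tau'-1)}_{\tau}=\bar{X}_{\tau:\tau'-1}/\lambda$, but the quantity compared against it, $\bar{X}_{\tau':\tau''}/\lambda$, changes with $\tau''$: if $\bar{X}_{\tau':\tau''}\le\bar{X}_{\tau:\tau'-1}$ at some $\tau''<T$ but subsequent counts are large, one can have $\bar{X}_{\tau':T}>\bar{X}_{\tau:\tau'-1}$, so by your own criterion $C^{(T)}_{\tau}$ no longer dominates $C^{(T)}_{\tau'}$ at time $T$. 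Pairwise domination is not monotone in time, and the same failure occurs for the zero line: $C^{(\tau'')}_{\tau'}\le 0$ everywhere does not imply $C^{(T)}_{\tau'}\le 0$ everywhere. What does persist is exclusion from the pointwise maximum, but seeing this requires a case split: for $\mu$ with $C^{(\tau'')}_{\tau'}(\mu)>0$ the time-invariant difference $C^{(\tau'-1)}_{\tau}(\mu)\ge 0$ gives $C^{(T)}_{\tau}(\mu)\ge C^{(T)}_{\tau'}(\mu)$, while for $\mu$ with $C^{(\tau'')}_{\tau'}(\mu)\le 0$ one has $C^{(T)}_{\tau'}(\mu)\le C^{(T)}_{\tau''+1}(\mu)$, so a third, later-starting curve must be brought in. Without this split your necessity direction does not go through.

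A second, smaller gap concerns sufficiency. Membership in $\mathfrak{I}_T$ requires a single $\mu$ at which $[C^{(T)}_{\tau'}(\mu)]^+$ strictly exceeds $[C^{(T)}_{\tau}(\mu)]^+$ for \emph{every} $\tau\ne\tau'$ simultaneously, whereas non-domination by each individual competitor only yields, for each $\tau$ separately, some $\mu$ depending on $\tau$. To produce a common $\mu$ you need the ordering structure of the family --- the root of $C^{(T)}_{\tau}$ is monotone increasing in $\bar{X}_{\tau:T}$, so the regions where each curve wins are ordered intervals --- which is exactly the observation the paper invokes for its converse. ``Direct substitution'' of ratios for means does not by itself close this step.
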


\begin{proof}
Suppose $\exists  \tau, \tau^{''}$ such that we have 

\[ \bar{X}_{\tau:\tau^{'}-1} \geq \bar{X}_{\tau^{'}:\tau^{''}}. \]

Consider the two curves $C_{\tau}^{(\tau^{''})}(\mu)$ and $C_{\tau^{'}}^{(\tau^{''})}(\mu)$:

\begin{align}
  C_{\tau}^{(\tau^{''})}(\mu) &= \sum_{t=\tau}^{\tau^{''}}[X_t \log(\mu)-\lambda(\mu-1)] \\
                              &= [(\tau^{'}-\tau)\bar{X}_{\tau:\tau^{'}-1} + (\tau^{''}-\tau^{'}+1)\bar{X}_{\tau^{'}:\tau^{''}}]\log(\mu) - [\tau^{''}-\tau+1]\lambda(\mu-1) \\
                              &\geq [(\tau^{''}-\tau+1)\bar{X}_{\tau:\tau^{''}}]\log(\mu) - [\tau^{''}-\tau+1]\lambda(\mu-1) \\
                              &= \frac{\tau^{''}-\tau+1}{\tau^{''}-\tau^{'}+1} C_{\tau^{'}}^{(\tau^{''})}(\mu) \\
                              &\geq C_{\tau^{'}}^{(\tau^{''})}(\mu).
\end{align}

So $\tau^{'} \not\in \mathfrak{I}_{T}$.

What this is saying is that if $\bar{X}_{\tau, \tau^{'}-1} < \bar{X}_{\tau^{'}, \tau^{''}}$, then the interval $[\tau, \tau^{''}]$ has both greater intensity and greater duration than the interval $[\tau^{'}, \tau^{''}]$, so $\tau^{'}$ cannot be a candidate start point.

To prove the reverse, we note that the non-unit point of intersection between $C_{\tau^{'}}^{(\tau^{''})}(\mu)$ and $0$ is a monotone increasing function of $\bar{X}_{\tau^{'}:\tau^{''}}$. Therefore, if for all $\tau < \tau^{'} < \tau^{''} \leq T$,

$$\bar{X}_{\tau:\tau^{''}} < \bar{X}_{\tau^{'}:\tau^{''}},\ \ \ \lambda < \bar{X}_{\tau^{'}:\tau^{''}},$$

we must have that $\exists \mu > 1$ such that $[C_{\tau^{'}}^{(\tau{''})}(\mu)]^+ > [C_{\tau}^{(\tau{''})}(\mu)]^+$. This gives $\tau^{'} \in \mathfrak{I}_{T}$.

\end{proof}

\begin{lemma}\label{lem:convex_minorant}
Define the sequence $Z_0 := 0$, $Z_T := \sum_{t=1}^T X_t$.

If $\tau$ in $\mathfrak{I}_{T}$, then:

\begin{itemize}
    \item $\tau-1$ is an extreme point of the largest convex minorant of the sequence $\{ Z_t-t\lambda: t \leq T \}$.
    \item $\forall T \geq t > \tau-1$, we additionally have that $Z_t-t\lambda > Z_{\tau-1}-(\tau-1)\lambda$, i.e. $\tau-1$ is on the "right-hand side" of the convex minorant.
\end{itemize}
\end{lemma}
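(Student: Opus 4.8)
The plan is to work throughout with the drifted partial sum process $W_t:=Z_t-t\lambda$ for $0\le t\le T$, whose graph $\{(t,W_t)\}_{t=0}^{T}$ is exactly the sequence whose convex minorant is in question, and to read off from the definition of $\mathfrak I_T$ a single slope that pins $(\tau-1,W_{\tau-1})$ down as a vertex. The key algebraic fact I would record first is the link between the log--curves and $W$: from $C_a^{(b)}(\mu)=(Z_b-Z_{a-1})\log\mu-(b-a+1)\lambda(\mu-1)$ and $Z_b-Z_{a-1}=(W_b-W_{a-1})+(b-a+1)\lambda$ one gets
\[
C_a^{(b)}(\mu)=(W_b-W_{a-1})\log\mu+(b-a+1)\lambda\bigl(\log\mu-(\mu-1)\bigr).
\]
For $\mu>1$ we have $\log\mu>0$ and $\log\mu-(\mu-1)<0$ (strict convexity of $\mu\mapsto\mu-1-\log\mu$), so, setting $\kappa(\mu):=\lambda\bigl(\tfrac{\mu-1}{\log\mu}-1\bigr)>0$, the sign of $C_a^{(b)}(\mu)$ equals the sign of $\frac{W_b-W_{a-1}}{b-a+1}-\kappa(\mu)$; informally, on $[a,b]$ the curve of intensity $\mu$ is positive exactly when the average increment of $W$ there exceeds $\kappa(\mu)$.

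Next I would exploit membership in $\mathfrak I_T$. Take $\tau\in\mathfrak I_T$ with a witness $\mu^{*}$; since every curve vanishes at $\mu=1$, necessarily $\mu^{*}>1$, and $[C_\tau^{(T)}(\mu^{*})]^{+}>[C_{\tau'}^{(T)}(\mu^{*})]^{+}$ for all $\tau'\ne\tau$ forces both $C_\tau^{(T)}(\mu^{*})>0$ and $C_\tau^{(T)}(\mu^{*})>C_{\tau'}^{(T)}(\mu^{*})$ for every $\tau'\ne\tau$. Write $j:=\tau-1$ and $\kappa:=\kappa(\mu^{*})>0$. Splitting at $\tau$: for $\tau'<\tau$ we have $C_{\tau'}^{(T)}=C_{\tau'}^{(\tau-1)}+C_\tau^{(T)}$, so the strict comparison gives $C_{\tau'}^{(\tau-1)}(\mu^{*})<0$, i.e.\ $\frac{W_j-W_{\tau'-1}}{j-(\tau'-1)}<\kappa$; letting $\tau'$ range over $\{1,\dots,\tau-1\}$ this says $\frac{W_j-W_i}{j-i}<\kappa$ for all $0\le i<j$. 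Dually, for $\tau'>\tau$ we have $C_\tau^{(T)}=C_\tau^{(\tau'-1)}+C_{\tau'}^{(T)}$, forcing $C_\tau^{(\tau'-1)}(\mu^{*})>0$, i.e.\ $\frac{W_{\tau'-1}-W_j}{(\tau'-1)-j}>\kappa$; combined with $C_\tau^{(T)}(\mu^{*})>0$ this says $\frac{W_t-W_j}{t-j}>\kappa$ for all $j<t\le T$.

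These two inequality families close both bullets. The second bullet is immediate, since $\frac{W_t-W_j}{t-j}>\kappa>0$ yields $W_t>W_j$ for every $\tau-1<t\le T$. For the first, consider the affine function $\ell(x):=W_j+\kappa(x-j)$: it satisfies $\ell(j)=W_j$ and $\ell(m)<W_m$ for every integer $m\in\{0,\dots,T\}\setminus\{j\}$ (below to the left because $\frac{W_j-W_i}{j-i}<\kappa$, below to the right because $\frac{W_t-W_j}{t-j}>\kappa$), and this holds verbatim when $\tau=1$, where $j=0$ and there are no left points. Being an affine minorant of the data, $\ell$ lies below the greatest convex minorant $g$, so $g(j)\ge\ell(j)=W_j$ and hence $g(j)=W_j$; moreover a point in the relative interior of a linear piece of $g$ cannot carry a supporting line lying \emph{strictly} below the data at both endpoints of that piece, so $(j,W_j)$ must be a vertex of $g$ -- which is exactly the assertion that $\tau-1$ is an extreme point of the largest convex minorant.

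The step I would handle most carefully is the contribution of the immediate right neighbour $\tau'=\tau+1$, i.e.\ the instance giving $\frac{W_j-W_i}{j-i}<\kappa<W_{j+1}-W_j=X_\tau-\lambda$; equivalently, it is the fact that $\tau\in\mathfrak I_T$ forces $X_\tau>\lambda$, so the curve at $\tau$ is not swallowed by a merge with its predecessor. Without this instance the supporting-line argument is unverified one unit to the right of $j$, and indeed a curve started where $X_\tau\le\lambda$ is instantly dominated. One could instead quote the characterisation of $\mathfrak I_T$ from the preceding lemma, but then one must read its two conditions as including the degenerate ``$\tau''=\tau'$'' endpoint; routing the argument through $\mu^{*}$ and $\kappa$ as above sidesteps that, and everything else is bookkeeping in $W$ together with the strict convexity of $\mu\mapsto\mu-1-\log\mu$.
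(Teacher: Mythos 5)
Your proof is correct, and it reaches the result by a genuinely different route from the paper's. The paper first establishes a separate characterisation of $\mathfrak{I}_T$ in terms of running means (its preceding lemma: $\lambda<\bar{X}_{\tau:\tau''}$ for all $\tau''$, and $\bar{X}_{\tau',\tau-1}<\bar{X}_{\tau:\tau''}$ for all $\tau'<\tau\le\tau''$), then reads the second condition as the three-chord convexity inequality in the walk $Z_t-t\lambda$ to place $\tau-1$ on the greatest convex minorant, and handles the ``right-hand side'' bullet by a separate contradiction argument (if some later $Z_{\tau'}-\tau'\lambda$ dipped below $Z_{\tau-1}-(\tau-1)\lambda$, the curve $C_\tau$ would be non-positive or pointwise dominated). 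You bypass that intermediate lemma entirely: you take a witness intensity $\mu^*$ straight from the definition of $\mathfrak{I}_T$, convert it into a single slope $\kappa(\mu^*)>0$, and show the line of slope $\kappa$ through $(\tau-1,W_{\tau-1})$ is a strict supporting line of the walk, which delivers both bullets in one stroke. This is more self-contained and, as you note, cleanly supplies the instance $X_\tau>\lambda$ (the $t=\tau$ case of your right-hand family) that the paper obtains from the first condition of its characterisation lemma and that is needed one step to the right of $\tau-1$; your decomposition identities $C_{\tau'}^{(T)}=C_{\tau'}^{(\tau-1)}+C_\tau^{(T)}$ and $C_\tau^{(T)}=C_\tau^{(\tau'-1)}+C_{\tau'}^{(T)}$ are exactly the right bookkeeping. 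The one thing the paper's route buys that yours does not is the converse: its proof remarks that the argument is reversible, and that equivalence (not just the stated implication) is what is later combined with Andersen's theorem to get the expected number of curves as an equality rather than an upper bound. Your argument proves precisely the implication stated in the lemma, which is all that is asked.
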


\begin{proof}

Let $\tau$ in $\mathfrak{I}_{T}$.

As above, we have that for any $1 \leq \tau < \tau^{'} < \tau^{''} \leq T$:

\[ \bar{X}_{\tau, \tau^{'}-1} < \bar{X}_{\tau^{'}, \tau^{''}}. \]

This can be equivalently written as

\[ \frac{Z_{\tau^{'}-1} - Z_{\tau}}{(\tau^{'}-1)-\tau}  <  \frac{Z_{\tau^{''}} - Z_{\tau^{'}-1}}{\tau^{''}-(\tau^{'}-1)},  \]

which shows that $\tau-1$ is in the largest convex minorant of $Z_t$, and therefore of $Z_t-t\lambda$.

To show we are on the right-hand side of this convex minorant, we assume hoping for a contradiction that $\exists \tau^{'} \geq \tau$ such that $Z_{\tau^{'}}-\tau^{'}\lambda < Z_{\tau-1}-(\tau-1)\lambda$. We then have that

\[ C_{\tau}^{(\tau^{'})}(\mu) = \sum_{t=\tau}^{\tau^{'}}[X_t \log(\mu)-\lambda(\mu-1)] \]

\[ = [Z_{\tau^{'}}-Z_{\tau-1}]\log(\mu) - [\tau^{'}-(\tau-1)]\lambda(\mu-1) \]

\[ < [\tau^{'}-(\tau-1)]\lambda[\log(\mu) - (\mu-1)] \]

\[ < 0.\]

So for $\tau^{'}=\tau$ we have that $X_\tau < \lambda$, and for $\tau^{'}>\tau$ we have that $C_{\tau}^{(T)}(\mu) < C_{\tau^{'}}^{(T)}(\mu)$ pointwise. Either way, $\tau \not\in \mathfrak{I}_{T}$.

To prove the reverse, note that the argument for being on the convex minorant is entirely reversible, and that $Z_t-t\lambda < Z_{\tau-1}-(\tau-1)\lambda$ is equivalent to $\bar{X}_{\tau, t} < \lambda$.

\end{proof}

\begin{figure}[hbt!]
    \centering
    \includegraphics[width=0.49\textwidth]{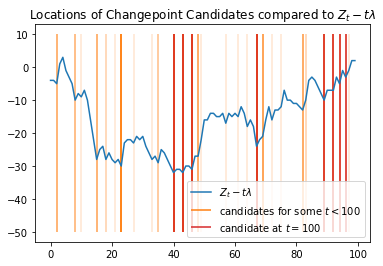}
    \caption{Plots of $\tau-1$ for each anomaly start point $\tau$ compared to the random walk $Z_t-t\lambda$.}
    \label{random_walk}
\end{figure}

Figure \ref{random_walk} shows what this looks like. For a signal $X_t$ that Poisson-FOCuS is run over, the random walk $Z_t-t\lambda$ is plotted. Values of $\tau-1$ for each candidate anomaly start point $\tau$ are highlighted in orange, with the intensity of the highlight corresponding to how long they were kept, or highlighted in red if they were still kept by the time $T=100$.

For each candidate point $\tau$ with $\tau-1$ highlighted in red, we have that:
\begin{itemize}
    \item Convex minorant: The gradient drawn through $Z_{\tau-1}-(\tau-1)\lambda$ from any point before $\tau-1$ must be less than the gradient drawn through $Z_{\tau-1}-(\tau-1)\lambda$ by any point after $\tau-1$.
    \item Right side: It is possible to draw a straight horizontal line from $Z_{\tau-1}-(\tau-1)\lambda$ to the right side of the graph without crossing any other $Z_t-t\lambda$.
\end{itemize}

This is approximately half the points in the convex minorant of $Z_t-t\lambda$ (the other half being the left-hand side, with one point - the minimum - being in both).

\begin{theorem}\label{thm:andersen}

Let $X_t, 1 \leq t \leq T$ be independent identically distributed continuous random variables, and let $S_t := \sum_{s=1}^t X_s$ be the corresponding random walk. Then the number of points $H(T)$ on the convex minorant of the sequence $(0, S_1, S_2, ..., S_T)$ (not including endpoints) has the distribution

\[ H(T) \sim \sum_{t=1}^{T-1} Y_t,  \]

\[ Y_t \sim \text{Bernoulli}\left(\frac{1}{t+1}\right), \]

where the $Y_t$ are independent of each other and the distribution of the $X_t$.

\end{theorem}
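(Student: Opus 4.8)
The plan is to establish the distributional identity $H(T)\sim\sum_{t=1}^{T-1}Y_t$ with $Y_t\sim\text{Bernoulli}(1/(t+1))$ independent, by a combinatorial/exchangeability argument on the increments $X_1,\dots,X_T$. First I would introduce a convenient encoding: since the $X_t$ are i.i.d.\ continuous, ties have probability zero, and the law of $(X_1,\dots,X_T)$ is invariant under permutations. The vertices of the convex minorant of $(0,S_1,\dots,S_T)$ correspond to a partition of $\{1,\dots,T\}$ into the consecutive blocks of indices lying between successive vertices; equivalently, the slopes of the minorant's edges are a rearrangement of block-averages of the increments, sorted in increasing order. The number of interior vertices $H(T)$ equals (number of edges) $-1$, i.e. (number of blocks) $-1$.

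The key step is a recursion in $T$ that decouples the ``last increment'' from the rest. Condition on the unordered multiset $\{X_1,\dots,X_T\}$; by exchangeability every arrangement is equally likely. I would track where the minimal partial-sum vertex sits, or — cleaner — use the standard cycle-lemma / Vervaat-type bijection: the position of index $T$ relative to the convex-minorant structure of the first $T-1$ increments adds a new interior vertex precisely when $X_T$, inserted into the sorted list of current edge-slopes, does \emph{not} merge into the last block. A direct way to get the Bernoulli$(1/(t+1))$ factors is the following: run the construction by adding increments one at a time in a \emph{uniformly random order} (legitimate by exchangeability). When the $(t+1)$-st increment is added to a configuration built from $t$ increments, a classical fact about ``records for convex minorants'' (equivalently, the number of blocks behaving like the number of cycles of a uniform random permutation, or the number of records of a uniform random sequence) shows that the number of blocks increases by $1$ with probability exactly $1/(t+1)$, independently of the past. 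Summing these indicator increments over $t=1,\dots,T-1$ and subtracting the constant $1$ (edges minus one) yields $H(T)=\sum_{t=1}^{T-1}Y_t$ in distribution, with the $Y_t$ independent and $Y_t\sim\text{Bernoulli}(1/(t+1))$.

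Concretely I would carry out: (1) reduce to the a.s.\ distinct case and record the permutation-invariance of the increment vector; (2) set up the bijection between convex-minorant vertex sets and ordered set partitions into blocks with increasing block-averages, so that $H(T)=\#\text{blocks}-1$; (3) prove the one-step lemma: adding a fresh i.i.d.\ increment in random position increases the block count by a Bernoulli$(1/(t+1))$ amount, independent of everything so far — this is where I would either cite Sparre Andersen's identity directly or give the short exchangeability proof by symmetry among the $t+1$ possible ranks of the new increment's ``contribution''; (4) iterate from $T=1$ up to $T$ and collect terms. The main obstacle is step (3): making rigorous the claim that the \emph{independence across $t$} holds and that the relevant event really has probability $1/(t+1)$ — the subtlety is that whether a new vertex is created depends on the slope of the \emph{current last edge}, which is itself random, so one must argue (via the exchangeability of the first $t$ increments and the independence of $X_{t+1}$ from them) that conditionally the new increment is equally likely to fall into any of the $t+1$ ``slots'', exactly one of which creates the new rightmost vertex. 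Once that symmetry is pinned down, the rest is bookkeeping.
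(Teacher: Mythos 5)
The paper does not actually prove this theorem: its ``proof'' is a one-line citation to Sparre Andersen (1954), so the relevant comparison is whether your sketch would stand as a self-contained argument. It would not, because your step (3) --- the one-step lemma asserting that inserting the $(t+1)$-st increment in a uniformly random position increases the number of faces by a Bernoulli$(1/(t+1))$ amount, independently of the past --- is false as a pathwise statement. The face count of the convex minorant is not monotone under insertion of an increment: take increments $1,2,3$ in that order (three faces, since the partial-sum slopes $1,2,3$ are increasing) and insert a fourth increment equal to $-100$ in positions $1,2,3,4$; the resulting walks have $4,3,2,1$ faces respectively, so the count can drop by $2$ and the change is certainly not a $\{0,1\}$-valued variable. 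The identity $H(T)\sim\sum_{t=1}^{T-1}\mathrm{Bernoulli}(1/(t+1))$ is purely distributional, and there is no naive coupling in which each insertion contributes an independent Bernoulli indicator; your proposed symmetry argument (``$t+1$ equally likely slots, exactly one of which creates the new rightmost vertex'') does not describe the actual effect of an insertion, which can restructure the entire minorant rather than merely add or not add a vertex.

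The gap is therefore exactly where you flagged it, but it is not a matter of ``pinning down the symmetry'': the sequential CRP-style dynamics you propose does not hold, and the known proofs avoid it. They instead establish a global correspondence between the partition of $\{1,\dots,T\}$ induced by the faces (your step (2), which is fine) and the cycle partition of a uniform random permutation --- via Sparre Andersen's combinatorial identities or the cycle-lemma bijection of Abramson, Pitman, Ross and Uribe Bravo --- and only then read off the Bernoulli decomposition from the Feller coupling for permutation cycles. Your fallback of citing Sparre Andersen directly is legitimate and is precisely what the paper does; if you want a self-contained proof you would need to replace step (3) with one of these global arguments.
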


\begin{proof}
See \cite{Andersen1954-ik}
\end{proof}

Under the null hypothesis, we have $X_t \sim \text{Poisson}(\lambda)$ independent and identically distributed.

By Lemma \ref{lem:convex_minorant}, the $|\mathfrak{I}_{T}|$ is the number of points on the right-hand side of the convex minorant of $ \{ Z_t-t\lambda: t \leq T\} $.

By Therorem \ref{thm:andersen}, the expected number of points on the convex minorant of $ \{ Z_t-t\lambda: t \leq T\} $ not including endpoints is 
\[ \mathbb{E}[\text{points on convex minorant}] = \sum_{t=2}^{T} \frac{1}{t}. \]

By symmetry, the expected number of points on the right-hand side of the convex minorant (including the minimum point, which is on both sides), is

\[ \mathbb{E}[|\mathfrak{I}_{T}|] = \frac{1}{2}\sum_{t=1}^{T} \frac{1}{t}. \]

Because $\tau \in \mathfrak{I}_T$ is related to $\tau-1$ (rather than $\tau$) being on the right-hand side of the convex minorant of $\{ Z_t-t\lambda: t \leq T\}$, it is impossible to be on the rightmost endpoint. However, it could be that $\tau-1 = 0$ could be both the leftmost endpoint and on the right-hand side of the convex minorant, which would give an additional curve beyond those given by Theorem \ref{thm:andersen}. This would require $\min\{ Z_t-t\lambda: t \leq T\} \geq 0$, which has a probability that $\rightarrow 0$ as $T \rightarrow \infty$, and can therefore be discounted for large values of $T$ as it falls within the harmonic upper bound given below.

We have by standard results for harmonic sums that

\[ \sum_{t=1}^{T} \frac{1}{t} \in [\log(T), \log(T)+1],  \]

Giving us that

\[ \mathbb{E}[|\mathfrak{I}_{T}|] \in \left[\frac{\log(T)}{2}, \frac{\log(T)+1}{2}\right]. \]

\end{proof}

\boundedcurves*

\begin{proof}

Let $\lambda > 0$, $\mu_{\text{min}} > 1$ be fixed, and $X_T \sim \text{Poisson}(\lambda)$. Define $S_0 = 0$, and for each $T \in \mathbb{N}$ recursively define

\[  S_{T+1} = S_T + X_{T+1}\log(\mu_{\text{min}}) - \lambda(\mu_{\text{min}}-1).  \]

This gives essentially Page's statistic without resetting negative values to zero. We further define:

\[ H(X) := \inf_{T>1} \{ T: S_T \leq 0 \}  \]

i.e. $H(X)$ is the time elapsed between resets to zero of Page's statistic.

In order to prove positive recurrence of Page's statistic, we now show that $\mathbb{E}[H(X)]$ is finite.

We have that

\[ \mathbb{E}[S_T] = \lambda T [\log(\mu_{\text{min}}) - (\mu_{\text{min}}-1)] < 0, \]
\[ \text{Var}[S_T] = \lambda T (\log(\mu_{\text{min}}))^2 < \infty.\]

This gives that $H(X) < \infty$ almost surely.

By the central limit theorem, we have that as $T \rightarrow \infty$,

\[ \frac{S_T - \lambda T [\log(\mu_{\text{min}}) - (\mu_{\text{min}}-1)]}{\sqrt{\lambda T}\log(\mu_{\text{min}})} \approx N(0, 1).  \]

Using this approximation we can then calculate

\[ S_T \approx N(0, 1) \sqrt{\lambda T}\log(\mu_{\text{min}}) + \lambda T [\log(\mu_{\text{min}}) - (\mu_{\text{min}}-1)].  \]

\[  \mathbb{P}(S_T < 0) \approx \Phi\left(\sqrt{\lambda T}\left[1 - \frac{(\mu_{\text{min}}-1)}{\log(\mu_{\text{min}})}\right]\right) \]

This gives us the following bound:

\begin{align}
    \mathbb{E}[H(X)] &= \sum_{T=1}^{\infty} \mathbb{P}[H(X) \leq T] \\
                     &\leq \sum_{T=1}^{\infty} \mathbb{P}(S_T < 0) \\
                     &\approx \sum_{T=1}^{\infty} \Phi\left(\sqrt{\lambda T}\left[1 - \frac{(\mu_{\text{min}}-1)}{\log(\mu_{\text{min}})}\right]\right) \\
                     &< \infty.
\end{align}

The last step is because the Gaussian distribution has tails that drop as the square of an exponential, and geometric series have finite sum.

Therefore, the expected number of curves in the FOCuS algorithm running using a $\mu_{\text{min}}$ is bounded, because all curves in the algorithm are removed each time Page's statistic using $\mu_{\text{min}}$ resets to $0$, and the expected time between resets is finite.

\end{proof}

\section{Bias from estimating background rate}
\label{section:background_bias}

When using Poisson-FOCuS on a dataset requiring background rate estimation, particular care needs to be taken with the choice of the estimator for the background rate. This is because:

\begin{itemize}
    \item The presence of an anomaly within the data being used to estimate background rate could destabilise the estimation, so robust methods are preferred.
    \item The ability of Poisson-FOCuS to give immediate detections requires a background estimate for time $T$ to be available using only data from $t \leq T$. Delaying this estimate will also delay detections.
    \item Small, consistent biases in background estimation will be recorded as anomalies over long timescales.
\end{itemize}

\begin{figure}[hbt!]
    \centering
    \includegraphics[width=0.49\textwidth]{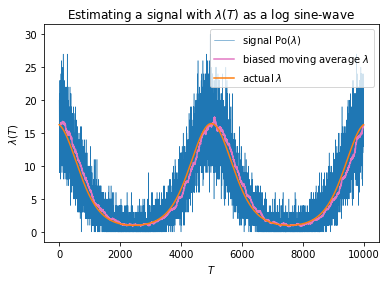}
    \caption{A log-sine wave using a biased method to recover $\lambda(T)$}
    \label{fig:signal_estimation}
\end{figure}

\begin{figure}
     \centering
     \begin{subfigure}[b]{0.49\textwidth}
         \centering
         \includegraphics[width=\textwidth]{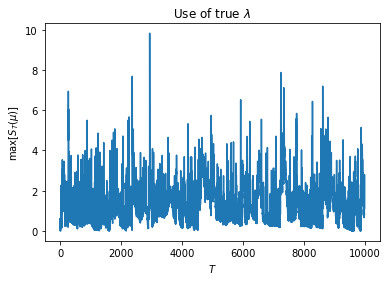}
         \caption{}
         \label{fig:no_estimation}
     \end{subfigure}
     \hfill
     \begin{subfigure}[b]{0.49\textwidth}
         \centering
         \includegraphics[width=\textwidth]{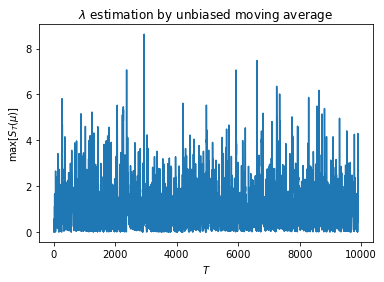}
         \caption{}
         \label{fig:unbiased_estimation}
     \end{subfigure}
     \hfill
     \begin{subfigure}[b]{0.49\textwidth}
         \centering
         \includegraphics[width=\textwidth]{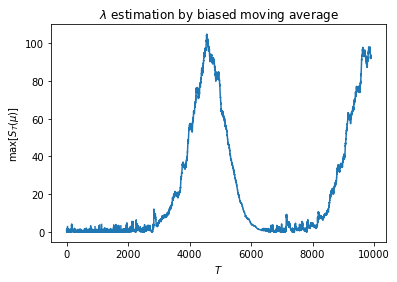}
         \caption{}
         \label{fig:biased_estimation}
     \end{subfigure}
     \hfill
     \begin{subfigure}[b]{0.49\textwidth}
         \centering
         \includegraphics[width=\textwidth]{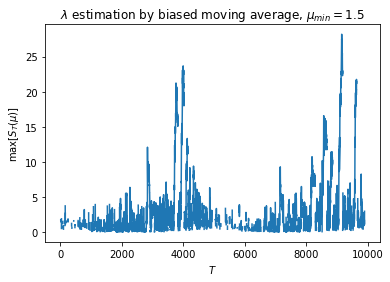}
         \caption{}
         \label{fig:biased_estimation_mumin}
     \end{subfigure}
     
     \caption{Plots of a run of FOCuS over the log-sine wave where various background estimation methods are employed.}
     \label{fig:background_estimations}
\end{figure}

To show this effect, Figure \ref{fig:signal_estimation} shows an oscillating signal drawn using $\lambda(T)$ the exponential of a sine wave, and a non-centred simple moving average allowing estimates of $\lambda(T)$ using only $t \leq T$. A centered moving average that would give an unbiased estimate of $\lambda(T)$ can also be calculated by horizontally shifting the non-centered moving average.

Figure \ref{fig:background_estimations} shows the statistical thresholds recorded by Poisson-FOCuS running over the signal with these different types of background estimation. While Figures \ref{fig:no_estimation} and \ref{fig:unbiased_estimation} show that the unbiased estimation method is comparable to using the true value of $\lambda$, the biased method in Figure \ref{fig:biased_estimation} has large peaks in the recorded statistical threshold as it passes over the signal. These are caused by the upward change in background rate, with lag in the estimation of this, being interpreted as a very small anomaly over a very long time period.

While the effect of a biased estimation method can be somewhat countered by setting a value for $\mu_{\text{min}} > 1$ as in \ref{fig:biased_estimation_mumin}, careful consideration should be given to de-biasing the background estimation method in order to avoid false detections.
\end{document}